\def\b#1{{\mathbb #1}}
\newcommand{\CC}{\mathbb{C}}
\newcommand{\RR}{\mathbb{R}}
\newcommand{\ZZ}{\mathbb{Z}}
\newcommand{\NN}{\mathbb{N}}
\newcommand{\Hi}{{\cal H}}
\newcommand{\Ob}{{\cal O}}
\newcommand{\Q}{{\cal Q}}
\newcommand{\A}{{\cal A}}
\newcommand{\B}{{\cal B}}
\newcommand{\R}{{\cal R}}
\newcommand{\Sy}{{\sf S}}
\newcommand{\T}{{\cal T}}
\newcommand{\Be}{\bm{e}}
\newcommand{\bu}{\bm{u}}
\newcommand{\bv}{\bm{v}}
\newcommand{\bx}{\bm{x}}
\newcommand{\Bp}{\bm{p}}
\newcommand{\bL}{\bm{L}}
\newcommand{\bpi}{{\bm \pi}}
\newcommand{\bomega}{{\bm\omega}}
\newcommand{\bphi}{{\bm\phi}}
\newcommand{\la}{\langle}
\newcommand{\ra}{\rangle}
\def\1{{\bf 1}}
\def\id{\mbox{id\,}}
\def\b{\mathfrak{b}}
\newcommand{\be}{\begin{equation}}
\newcommand{\ee}{\end{equation}}
\newcommand{\bea}{\begin{eqnarray}}
\newcommand{\eea}{\end{eqnarray}}
\newcommand{\ba}{\begin{array}}
\newcommand{\ea}{\end{array}}
\newtheorem{teorema}{Theorem}[section]
\newtheorem{corollario}{Corollary}[section]
\newtheorem{propo}{Proposition}[section]
\title{Energy cutoff, effective theories, noncommutativity, fuzzyness: 
 the case of $O(D)$-covariant fuzzy spheres}
\author{\speaker{Gaetano Fiore},  Francesco Pisacane%         \thanks{A footnote may follow.}
\\
       Dip. di Matematica e Applicazioni, Universit\`a di Napoli ``Federico II'',\\
\& INFN, Sezione di Napoli, \\
Complesso Universitario  M. S. Angelo, Via Cintia, 80126 Napoli, Italy\\
        E-mail: \email{gaetano.fiore@na.infn.it},  \email{francesco.pisacane@unina.it}}
\abstract{Projecting a quantum theory onto the Hilbert subspace
of states with energies below a  cutoff
$\overline{E}$ may lead to an effective theory with modified
observables, including a noncommutative space(time). Adding a confining potential well $V$  
with a very sharp minimum on a submanifold $N$ of the original space(time) $M$
may  induce a dimensional reduction to a noncommutative quantum theory on $N$.
Here in particular we briefly report on our application \cite{FioPisJGP18,FioPis18POS,FioPis19JPA,FioPis19LMP,Pis20} of this procedure to 
%$N$ equal to 
spheres $S^d\subset\RR^D$ of radius $r=1$ ($D=d\!+\!1>1$):
making $\overline{E}$ and the depth of the well depend on (and diverge with) 
$\Lambda\in\NN$
we obtain new fuzzy spheres $S^d_{\Lambda}$  covariant under the {\it full} orthogonal 
groups $O(D)$; the commutators of the coordinates depend only on the angular momentum, as in Snyder noncommutative spaces. Focusing on $d=1,2$, we also discuss uncertainty relations,
localization of states, diagonalization of the space coordinates and construction of coherent states. As  $\Lambda\to\infty$ the Hilbert
space dimension diverges, $S^d_{\Lambda}\to S^d$, 
and we recover ordinary quantum mechanics on $S^d$.  
These models might be suggestive  for effective models in quantum field theory,  quantum gravity or condensed matter physics.}
\begin{document}

%\tableofcontents

\section{Introduction}

The first example of noncommutative spacetime was proposed in 1947 by Snyder
\cite{Snyder} with the hope that  nontrivial (but Poincar\'e covariant)  commutation relations
among the coordinates  could 
cure  ultraviolet (UV) divergences in quantum field theory (QFT)\footnote{The idea had originated  in the '30s from Heisenberg, who proposed it in a letter to Peierls
\cite{Heisenberg30}; the idea propagated via Pauli to
Oppenheimer, who asked  his student  Snyder to develop it.}.
Shortly afterwards  the regularization of UV divergences based on an energy  cutoff, although  not Poincar\'e covariant, allowed
the renormalization of quantum electrodynamics; in the following decades this and other regularization methods within the renormalization program
have  allowed the extraction of physically
accurate predictions from quantum electrodynamics, chromodynamics, and 
the Standard Model of elementary particle physics. Therefore Snyder's model was almost forgotten for long time.
% (exceptions are e.g. \cite{Kad62,Mir67})
On the other hand, there is general consensus that any merging of quantum theory and general relativity
in an acceptable quantum gravity theory should lead to a cutoff (upper bound) on the local concentration of energy and to an associated lower bound 
(the Planck length  $l_p=\sqrt{\hbar  G/c^3}\sim 10^{-33}$cm) on the localizability of events. In fact,  by Heisenberg uncertainty relations,
to reduce the uncertainty $\Delta x$ of the coordinate $x$ of an event one must  increase
the uncertainty  $\Delta p_x$ of the conjugated momentum component by use of higher energy probes;
but by general relativity the associated concentration of energy in a small region  would produce 
a trapping surface (event horizon of a black hole) if it were too large; hence the size of this region, and $\Delta x$ itself, cannot be lower than the associated Schwarzschild radius, i.e.
 $l_p$. This heuristic argument \cite{Mea64} was made made more precise by 
 Doplicher, Fredenhagen, Roberts \cite{DopFreRob95}, who also  proposed that the latter bound 
could follow from appropriate noncommuting coordinates (for a  review of more recent developments see \cite{BahDopMorPia15}).

%Cita anche Zoupanos etc. As an arena for unifying interactions [Connes-Lott '92,...]
% connes etc

We  begin this paper observing  that in fact all these facts may stem from the same
(energy cutoff) mechanism: 
introducing an energy cutoff $\overline{E}$ in a quantum theory on a commutative space(time) $M$, i.e. projecting the theory on the Hilbert subspace with energy below $\overline{E}$,  directly induces a noncommutative deformation of the latter and lower bounds for the space(time) localizability.  Moreover, adding a confining potential well $V$  with a very sharp minimum on a submanifold $N$ of $M$
may  induce a dimensional reduction to a noncommutative quantum theory on $N$.
In \cite{FioPisJGP18,FioPis18POS,Pis20} we have applied this idea to obtain new fuzzy spheres $S^d_\Lambda$ of any dimension $d$ starting from quantum mechanics on
ordinary Euclidean spaces; while the seminal Madore-Hoppe fuzzy sphere (FS) \cite{Mad92,HopdeWNic} is covariant only under the rotation group, 
our $S^d_\Lambda$ are covariant  under the whole orthogonal groups.  After the mentioned general arguments, here we summarize how the $S^d_\Lambda$ are constructed and their main features, including uncertainty relations,
localization of states, diagonalization of the space coordinates and construction of coherent states \cite{FioPis19JPA,FioPis19LMP} for $d=1,2$. 

We recall that a fuzzy version of a commutative manifold $M$ is a sequence 
$\{\mathcal{A}_n\}_{n\in\NN}$
of {\it finite-dimensional} algebras such that
 $\mathcal{A}_n\overset{n\rightarrow\infty}\longrightarrow\mathcal{A}\equiv$algebra 
of regular functions on $M$. Since their introduction fuzzy spaces have raised a keen interest 
among mathematical and high-energy physicists as a non-perturbative technique in  QFT (or string, or M-, theory) based on a finite-discretization of space(time)  alternative to the lattice one; one main advantage  is that the algebras $\A_n$ can carry representations of Lie groups (not only of discrete ones).
In a  QFT  on a fuzzy space the ``cutoff'' $n$ works as 
a  parameter regularizing UV divergences, because  integration over fields  amounts to integration over matrices of a finite size, growing with $n$
(see e.g. \cite{GroMad92,GroKliPre96'} for the first QFT on the FS \cite{Mad92,HopdeWNic}, and \cite{GroKliPre96,Ramgoolam,Dolan:2003th,Ste17} for examples of QFT on fuzzy spheres of higher dimensions). If spacetime $M$ is enlarged to a higher-dimensional one $M'=M\times S_n$ - where $S_n$ is a fuzzy space, 
instead of a compact manifold $S$ - it reduces the number of massive Kaluza-Klein modes of a field theory on $M'$  to a finite value \cite{AscMadManSteZou,GavManOrfZou15} (the extra
dimensions can be used to describe internal degrees of freedom).
In the matrix model formulations of  $M$-theory \cite{Banks, Berko} and string theory
\cite{IKKT97} fuzzy spaces may arise as subalgebras 
 giving the leading contribution to the path-integrals over  larger matrix algebras;
they respectively lead to quantized branes  in a 11- or 10- dimensional spacetime.

\medskip
Consider a quantum theory $\T$; we denote the Hilbert space of the system $\Sy$
by \ $\Hi$, \ the algebra of observables on $\Hi$ (or with a domain dense in $\Hi$)
%drop?
  by  \ $\A\equiv\mbox{Lin}(\Hi)$, \ the Hamiltonian by \ $H\in\A$. \ 
For a generic subspace $\overline{\Hi}\subset\Hi$ let $\overline{P}:\Hi\mapsto\overline{\Hi}$ \ be  the associated projection and
$$
\overline{\A}\equiv \mbox{Lin}\left(\overline{\Hi}\right)
=\{\overline{A}\equiv \overline{P}A\overline{P}\:\: |\:
A\in\A\}\neq \A.
$$
Assume now  $\overline{\Hi}$ is a subspace  such that:
i) \ $\overline{P}H=H\overline{P}$; \   ii) \ $\overline{\A}$ \ contains all the observables corresponding to measurements that we can {\it really}
perform with the experimental apparati at our disposal.
If the initial state of the system belongs\footnote{If the state is not pure, but described by a density matrix $\rho$, the condition becomes "if $\rho\in\overline{\A}$".}  to 
$\overline{\Hi}$, then neither the dynamical evolution ruled by $H$, nor any measurement can map it out of $\overline{\Hi}$, and we can describe  $\Sy$ by
the effective theory \ $\overline{\T}$ \ based on  the projected Hilbert space $\overline{\Hi}$, algebra of observables $\overline{\A}$ and Hamiltonian
\ $\overline{H}= H|_{\overline{\Hi}}$. \ If $\overline{\Hi}$, $H$ are invariant under some  group $G$, then 
\ $\overline{P},\overline{\A},\overline{H},\overline{\T}$ \ will be as well.

\medskip
As a  particular consequence, {\bf if the theory $\T$ is based on commuting 
coordinates $x_i$
(commutative space) this will be in general no longer  true for $\overline{\T}$: 
\ $[\overline{x_i},\overline{x_j}]\neq 0$.}

\medskip
A physically relevant instance of the above projection mechanism occurs when
$\overline{\Hi}$ is the subspace of $\Hi$ characterized by energies  $E$ below
a certain cutoff, \ $E\le\overline{E}$; 
then $\overline{\T}$ is a {\it low-energy effective approximation} of $\T$. 
The prototypical example  
is Peierls projection \cite{Peierls_magn} (see also \cite{Jackiw,Magro}) applied to the Landau model of a charged particle in a plane subject to a perpendicular magnetic field $B$: \  choosing $\overline{E}$ equal to the ground state energy $E_0$   implies
\  $[\overline{x_1},\overline{x_2}]=\frac{i\hbar c}{ieB}$ (here $e$ is the electric charge
of the particle, $c$ is the speed of light, $x_1,x_2$ are the Cartesian coordinates of the particle on the plane), so that the effective theory is on a noncommutative space. 
$\overline{E}$ is a deformation parameter, in the sense $\overline{\T}\to\T$ as
$\overline{E}\to\infty$. If $H$ is $G$-invariant then 
also $\overline{\Hi}$ and therefore
\ $\overline{P},\overline{\A},\overline{H},\overline{\T}$ \ automatically  are.
Given an observable $A$ (e.g.  $A=x_1$, in the Landau model), $\overline{A}$ will
measure the {\it same physical quantitity} (the $x_1$ coordinate of the particle, in the mentioned example) {\it with
an uncertainty compatible with $E\le\overline{E}$}; in other words, the measurement process
cannot make the system jump  out of $\overline{\Hi}$, i.e. in states of 
 energy $E>\overline{E}$.

Imposing an energy cutoff $E\le\overline{E}$ on theory $\T$ may  be useful at least for the following reasons
(which may co-exist):
 
\begin{itemize}

\item If $\overline{\Hi}^\perp$ is practically not accessible in preparing the initial state, nor through the dynamical evolution (which may include interactions with the environment,  encoded in the possibly time-dependent Hamiltonian), nor through the measurement processes, then $\overline{\T}$ on the smaller Hibert space $\overline{\Hi}$  is 
in principle sufficient for determining all physical predictions and in fact simpler to work with.

\item If at $E>\overline{E}$ we expect new physics  
not accountable by $\T$, then  $\overline{\T}$ 
may also help to figure out a new theory $\T'$ valid for all $E$.

\item As a regularization procedure of a QFT $\T$, an energy cutoff $\overline{E}$ may allow to make sense of $\T$ if this is originally   ill-defined  due to UV divergences - e.g. divergent contributions (loop 
integrals) to transition amplitudes - for generic finite values of (a finite number of)  bare parameters $\mu_I$ (e.g. masses, coupling constants,...)  present in the Hamiltonian/Action. These divergent contributions are due to virtual intermediate states of arbitrarily high energy $E$ that can be assumed by the system during the interaction. Imposing  $E\le\overline{E}$ (or some other regularization scheme) allows to  make  the (unknown) $\mu_I$
well-defined (at least in a perturbative sense) functions $\mu_I(\Q,\overline{E})$ of a small number of observable quantities $\Q_i$ (e.g. masses of asymptotic states, large distance coupling constants,...) and of 
$\overline{E}$ (or the other regularization parameter). Replacing these functions in the dependences $\Ob_A(\mu)$ of all the observables $\Ob_A$ (e.g. cross sections in scattering processes, decay times of unstable particles, etc.; here $A$ stands for a collective index which allows to distinguish not only the type of observable, but also the involved initial and final data, e.g. the initial and final type, number, momenta of the particles involved in a scattering process)  on the $\mu_I$ yields functions $\overline{\Ob}_A(\Q,\overline{E})$.
If the latter admit $\overline{E}\to \infty$ limits the theory is said to be UV renormalizable, and 
these limits tipically give a physically accurate relation between $\Q$ and the observed 
$\overline{\Ob}_A$.

 \end{itemize}

\begin{figure}[t]
\includegraphics[width=16cm]{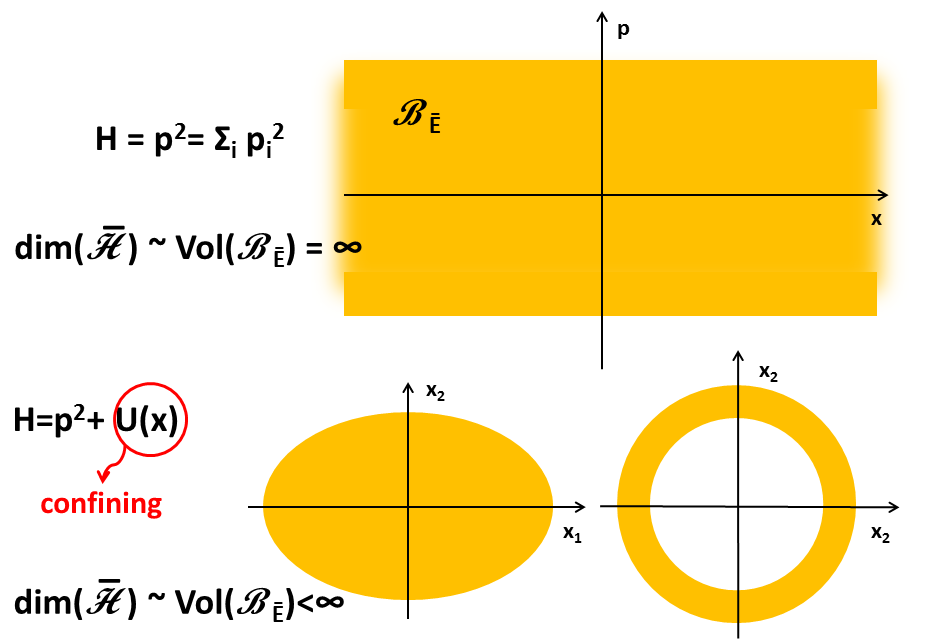}
\caption{Up: Classically allowed  phase space region $H(\bx^c,\Bp^c)\le\overline{E}$
if $H=\Bp^2$. \ Down: 
Classically allowed  configuration space region if $H=\Bp^c{}^2+V\big(\bx^c\big)$, with
the potential of the form $V\big(\bx^c\big)=a(x_1^c)^2+b(x_2^c)^2$ (left) or $V\big(\bx^c\big)=2k(r_c-1)^2$ (right), where $r_c=\sqrt{\bx^c{}^2}$.}
\label{fig1} 
\end{figure}

\bigskip
As a $\T$ consider now quantum mechanics (QM) of a zero spin particle on $\RR^D$ with a Hamiltonian $H(\bx,\Bp)$. 
If $\overline{\Hi}$ is  the subspace with energies $E\le\overline{E}$ then
its dimension is approximately  the  phase space volume of
the classical region  $\B_{\scriptscriptstyle\overline{E}}$ determined by the inequality 
$H\big(\bx^c,\Bp^c\big)\le\overline{E}$, in Planck constant units:
$$
\mbox{dim}(\overline{\Hi})\simeq\mbox{Vol}(\B_{\scriptscriptstyle\overline{E}})/h^D.
%\qquad \B_{\overline{E}}\equiv\{(x,p)\in\RR^{2D}\:\:|\:\: H(x,p)\le \overline{E}\}
$$
This is still infinite if  e.g. $H$  reduces to the kinetic term $\Bp^2$ (upper part of fig.
\ref{fig1}), while it is finite if $H$ contains a sufficiently strong 
binding potential $V(\bx)$
(lower part of fig. \ref{fig1}); consequently $\overline{\T}$  will be  a {\bf  
fuzzy approximation} of QM approximately confined in the (configuration space) region $\R\subset\RR^D$
determined by the inequality $V(\bx)\le \overline{E}$.\footnote{Of course, one can obtain a 
fuzzy noncommutative approximation of QM in a region $\R$ also imposing an energy
cutoff on a pre-existing noncommutative deformation of QM on $\R$, see e.g. the fuzzy disc of \cite{LizVitZam03}.}
We can obtain a  {\bf  NC, 
fuzzy approximation}  of QM on a {\bf submanifold} $N$ of $\RR^D$
adding a  {\bf `dimensional reduction' mechanism}, more precisely a $V(\bx)$
with a sharp minimum on $N$.\footnote{In passing, we note that  
defining submanifolds of noncommutative spaces is delicate problem \cite{Dan19};\cite{FioWeb20} proposes a rather general procedure in the framework of
Drinfel'd twist deformations of differential geometry.}

In the rest of the paper we report on our application \cite{FioPisJGP18,FioPis18POS,FioPis19JPA,FioPis19LMP,Pis20}
of the mechanism for $N$ equal to the  $d=(D\!-\!1)$-dimensional
 {\it sphere $S^d$} of radius $r=1$ ($r^2:=\bx^2$ is the square distance from the origin) and on the study of the resulting fuzzy spheres for $d=1,2$ \cite{FioPisJGP18,FioPis18POS,FioPis19JPA,FioPis19LMP}; 
the lower right corner of  fig. \ref{fig1} shows the corresponding 
region $\R$ (a thin spherical shell of radius $\simeq 1$) in the $d=1$   case. 
The plan  is as follows. Section \ref{Preli} contains
further preliminaries. In section \ref{GenConst} we sketch the construction
procedure of $S^d_\Lambda$ for generic $d\ge 1$.  In sections \ref{D2}, \ref{D3} we  respectively review   the 
 main features of $S^1_\Lambda,S^2_\Lambda$, 
 the eigenvalues and eigenvectors of the associated coordinate operators
$x_i$ (for the latter we prove slightly stronger results than in \cite{FioPis19JPA}); then we  present various systems of coherent states (SCS) on them and discuss their
localization both in configuration and (angular) momentum space. 
Finally, in section \ref{Conclu} we draw the conclusions and add final remarks, while 
comparing our $S^d_\Lambda$ with other fuzzy spheres, in particular
the celebrated Madore-Hoppe Fuzzy sphere (FS) \cite{Mad92,HopdeWNic}.

\section{Further preliminaries}
\label{Preli}

\subsection{Covariance}

$O(D)$-covariance of the theory means that for any orthogonal matrix $g\in O(D)$ there is
a unitary transformation $U(g)$  of the Hilbert space $\Hi$ (or $\overline{\Hi}$)
%commuting with scalar observables and 
such that $g_{ij}v_j=U^{\dagger}\!(g)\, v_i\, U(g)$ for all vectors $\bv$, and similarly for
other $O(D)$-tensors.
%, in particular $\bv=\bx$.
Fixed a $\bv$,  we  can split
\bea
\Hi=\bigcup_{\bu\in S^d}\Hi_{\bu}, \qquad \qquad\Hi_{\bu}:=\left\{\psi\in\Hi \:\:\: | \:\:\:
%\la \bv\ra_\psi\propto\bu
\left\la \bv\right\ra_\psi=\left|\la \bv\ra_\psi\right|\,\bu\,
\right\} .                        \label{splitH}
\eea
For each (unit vector) $\bu\in S^d$ consider a  $g\!\in\! O(D)$ such that $g\bu\!=\!\Be_1$, where $\Be_1\!:=\!(1,\!0,\!...,\!0)$, and define \ $\bv'\!:=\!g\bv$, \
so that $\bv\cdot\bu=v_1'$. For all $\psi\in\Hi_{\bu}$ \ we find \ 
$\left\langle \bv'\right\rangle_\psi=|\la \bv\ra_\psi|\Be_1$; \ moreover, \
$ \Hi_{\bu}=U^{\dagger}(g)\Hi_{\Be_1}$
(of course one obtains the same result replacing $\Be_1$ by any other $\Be_i$).

\subsection{Localization on $\mathbb{R}^D$, $S^d$ and $S^d_\Lambda$}
\label{Localization}

\begin{figure}[t]
%\begin{center}
\includegraphics[width=9cm]{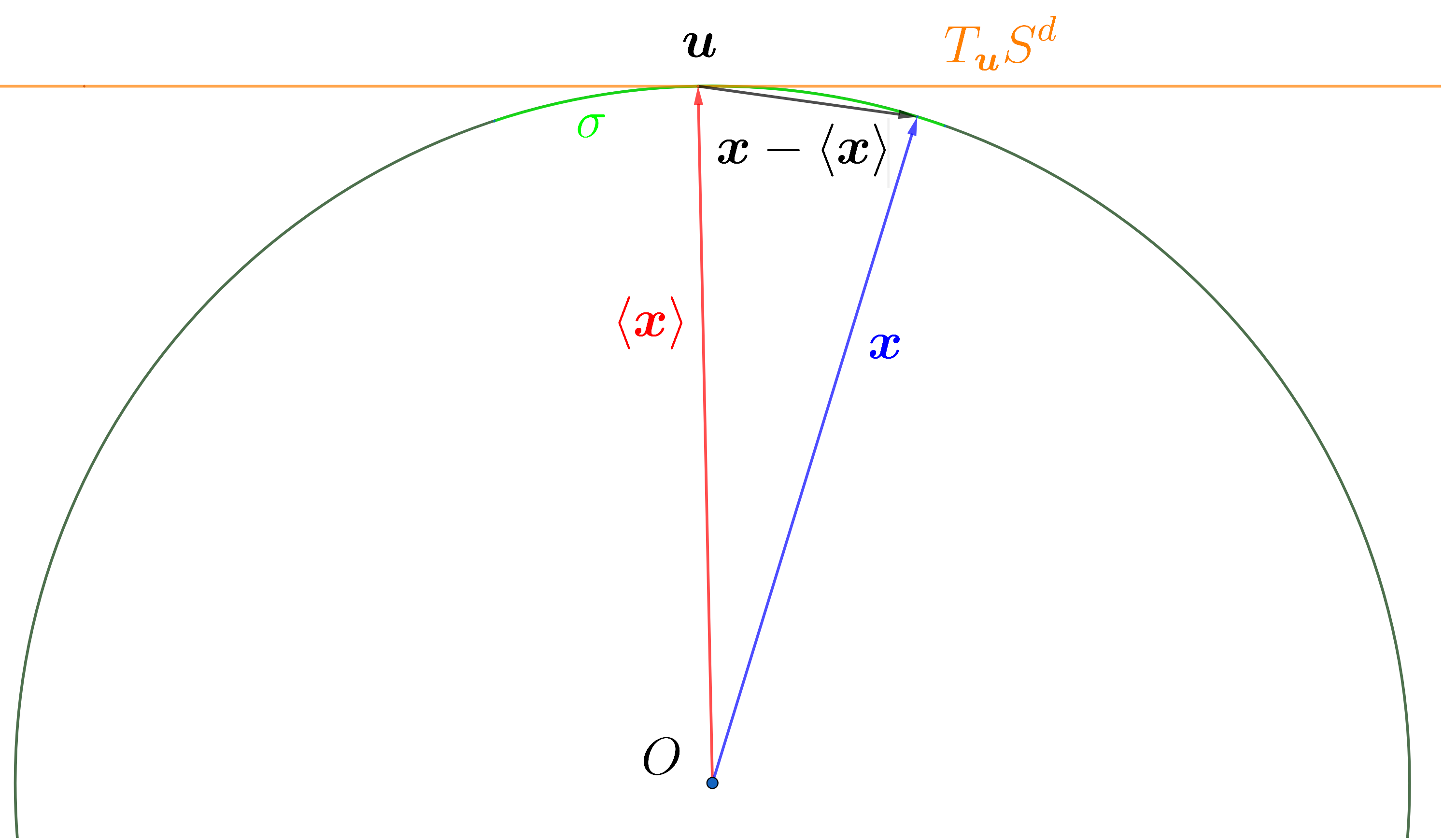}
\hfill\includegraphics[width=5cm]{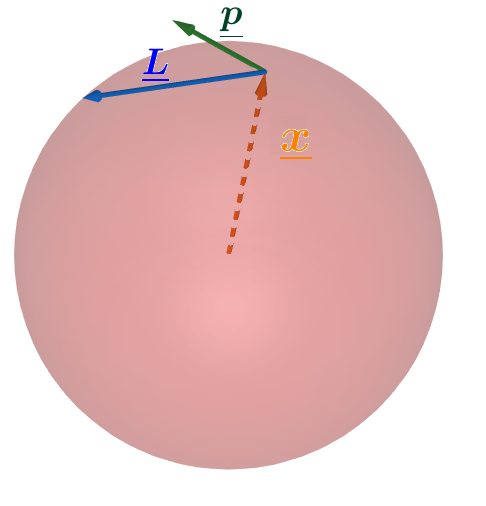}
%\end{center}
\caption{Left: the vectors {\color{blue}$\bm{x}$}, {\color{red}$\bu\equiv\left\langle\bm{x}\right\rangle$}, $\bm{x}-\left\langle\bm{x}\right\rangle$, the region {\color{green}$\sigma$} and the tangent plane {\color{orange}$T_{\bm{u}}S^d$} at $\bm{u}$. \ Right: perpendicularity
of $\bx$ and $\bL$.}
\label{Vett_tg}
\end{figure}

A good measure of the localization of a state in configuration space $\mathbb{R}^D$
is  its {\it spacial dispersion}, i.e.  the $O(D)$-invariant (and therefore reference-frame-independent) expectation value 
\be\label{uncnostra}
(\Delta \bx)^2\equiv\sum_{i=1}^D(\Delta x_i)^2
\equiv \left\langle \left( \bm{x}\!-\!\left\langle\bx\right\rangle\right) ^2\right\rangle
=\left\langle\bx^2\right\rangle- \left\langle \bx\right\rangle^2 
%=\left\langle \sum_{i=1}^Dx_ix_i\right\rangle-\sum_{i=1}^D\left\langle x_i\right\rangle^2;
\ee
on the state. Here  $\bm{x}\equiv(x_1,...,x_n)$ is the vector position observable of the particle
 in the ambient Euclidean space  $\RR^D$, the vector
$\left\langle \bx\right\rangle\equiv(\left\langle x_1\right\rangle,...,\left\langle x_n\right\rangle)$ pinpoints  the average position,
the scalar observable $\bx^2:=\sum_{i=1}^Dx_ix_i$ \
measures the square distance from the origin, 
the vector observable 
$\bx\!-\!\left\langle\bx\right\rangle$ 
measures the displacement from $\left\langle \bx\right\rangle$;  
(\ref{uncnostra}) is the expectation value of the square of the latter. 
We adopt $(\Delta \bx)^2$ also on $S^d,S^d_\Lambda$: in fact, if  the state  is localized in a small region 
$\sigma\subset S^d$ 
around a point $\bu\equiv\left\langle\bx\right\rangle\in S^d$ then
$(\Delta \bx)^2$ essentially reduces to the average square displacement
in the tangent  plane at $\bu$ (see fig. \ref{Vett_tg}, left), as wished. 
 If $\bx^2\equiv 1$ on the whole Hilbert space $\Hi$ 
(this occurs strictly if $\Hi={\cal L}^2(S^d)$
and also on Madore's FS  $S^2_n$, only approximately on our $S^d_\Lambda$), then $\left\langle \bx^2\right\rangle$
is state-independent, and  (\ref{uncnostra}) is minimal on the states
with maximal $\left\langle \bx\right\rangle^2$. By (\ref{splitH}) with $\bv=\bx$,
in each $\Hi_{\bu}$ $\left\langle \bx\right\rangle^2$ 
is maximized on the eigenvector(s) $\psi$ of  $x_1'=\bx\cdot\bu$
with the highest (in absolute value) eigenvalue (the latter exists on the Madore's FS, 
while  on $S^d$ it exists 
as a generalized eigenvector).

%This occurs in Madore's FS in a strict sense; on ${\cal L}^2(S^d)$...

\subsection{Diagonalization of a coordinate \ $x_i$ , \ and most localized states}
\label{diagx}

For $x_i$ to approximate well and $O(D)$-covariantly a coordinate of a quantum particle forced to stay on the commutative sphere $S^d$, its spectrum $\Sigma_{x_i}$ should fulfill at least the following properties:

\begin{enumerate}
\item $\Sigma_{x_i}$ is the same for all $i=1,...,D$ and choices of the reference frame. In particular, it is invariant under inversion \ $x_i\mapsto -x_i$.
\item In the commutative limit  $\Sigma_{x_i}$  becomes uniformly dense in $[-1,1]$, in particular the maximal and the minimal eigenvalues  converge to $1$ and $-1$, respectively.
\end{enumerate}
These properties are fulfilled by both the Madore FS and (at least for $d=1,2$) our  $S^d_\Lambda$. As explained in the previous subsection, the eigenstates with maximal eigenvalue
(in absolute value) have also maximal localization on $S^d ,S^2_n$; this also approximately 
ture on our $S^d_\Lambda$.

\subsection{Systems of coherent states (SCS)}
\label{SCS}

\noindent 
We recall that the canonical SCS  \ $\{\phi_z\}_{z\in\Omega}\subset\Hi$  on $\RR^D$   can be defined in three equivalent ways:
%(at least)

\begin{enumerate}

\item As the set of states saturating Heisenberg uncertainty relations (HUR) 
\ $\Delta x_i\Delta p_i\ge 1/2$. 

\item As the set of eigenstates of all annihilation operators 
$a_i$  with set of joint eigenvalues $z\in\Omega$.
%; on other manifolds $M$ nontrivial problem!

\item As the set of states generated by the group $G$ acting on the vacuum state $\phi_0$. 

\end{enumerate}
Here $\Hi={\cal L}^2(\RR^D)$, $\Omega\!\equiv\!\CC^D$,
 all variables have been made  dimensionless,  $a_i=x_i+i p_i$,
and $G$ is the Heisenberg-Weyl group. Characterizations 1 (for $D=3$), 2, 3 are
 due to
Schr\"odinger himself and   Klauder, Sudarshan, Glauber \cite{Schroe26,Klauder,Sudarshan,Glauber}. 
All of them admit (in general, non-equivalent) generalizations; 
see e.g. \cite{Klauder-Skager85,AliAntGaz13,AntBagGaz18,FioGueMaiMaz15,FioJPA18,FioRM19},
also for an overview on applications in elementary particle, nuclear, atomic, condensed matter, plasma  physics. The canonical SCS fulfills the following properties:
\begin{enumerate}%[(a)]
[label=\alph*)]
\item    {\bf Strong continuity} of $\phi_z $ as a function of $z\in\Omega$;

%\textbf{Resolution of the identity}:  \
%$\exists\, dz\equiv$ integration measure  on $\Omega$:
\item 
\textbf{Resolution of the identity: } $\quad
\id=\int_{\Omega} \, d\!\mu(z)\: P_z, \qquad P_z =\phi_z  \langle \phi_z,\cdot\rangle 
\equiv|\phi_z\rangle\langle\phi_z|; %\:\:\label{ResolId0}
$

\item     \textbf{Completeness}: \ \
$\overline{\mbox{Span}\left\{\phi_z \: |\: z\in\Omega\right\}}=\mathcal{H}$. 
\end{enumerate}
where $d\mu(z)=d\Re(z)\,d\Im(z)$, and the resolution b) is in the weak sense. 
These properties are often used 
\cite{Klauder-Skager85} for  defining SCS in general: 
a set $\{\phi_z\}_{z\in\Omega}\subset\Hi$, where $\Omega$ is a topological label space,
 is a {\it strong} SCS if it fulfills 
a), b) with a suitable  integration measure $d\!\mu(z)$ on $\Omega$;
a {\it weak} SCS if it fulfills a), c). \ As b) implies c), a strong SCS is also weak.
Perelomov  and  Gilmore develop \cite{PerelomovCMP72,GilmoreAnPh72} the concept of SCS  through approach 3 choosing $\Omega$ either a generic 
%locally compact ?
Lie group $G$, or more generally a coset $G/H$ thereof, acting  on $\Hi$ via an irreducible unitary representation $T$ (see e.g. \cite{Perelomov}). The steps are as follows:

\begin{itemize}
\item For all \  $\phi_0\!\in\!\Hi$, \
 let \ $\phi_g\!\equiv\! T(g)\phi_0$  for all $g\!\in\! G$, \ %maximal subgroup 
$H\!\equiv\!\{h\!\in\! G\:\: |\:\: \phi_h =\exp{\left[i\alpha(h)\right]}\phi_0\}$.

\item Then \ $|\phi_g\rangle\langle\phi_g|=|\phi_{gh}\rangle\langle\phi_{gh}|\equiv P_z$, 
\ i.e. depends only on \ $z\equiv[g]\in G/H\equiv\Omega$.

\item If  $\phi_0$ is {\it admissible}, i.e.    $%I_T\equiv
\int_{G}|\langle\phi_0,\! T(g)\phi_0\rangle|^2\,dg<\infty$, where  $dg$
is the left-invariant Haar measure on $G$, then  b) holds with  $d\!\mu(z)$
the normalized measure induced by $dg$ on $\Omega$. 
\end{itemize}
Clearly, if $G$ is compact all $\phi_0\!\in\!\Hi$ are admissible.
Following
Perelomov, the  CS that are closest  to classical states are obtained from a $\phi_0$ that maximizes $H$, or better the  isotropy subalgebra $\b$
in the {\it complex} hull of the Lie algebra of $G$;  $\phi_0$  is annihilated by some element(s) in $\b$,  the corresponding $\phi_g$ are eigenvectors of the latter
(property 2) and minimize the $G$-invariant uncertainty associated to the quadratic Casimir
[$(\Delta \bm{L})^2=\sum_{i<j} \Delta L_{ij}^2$  in the case $G=SO(D)$].
For $G=SO(3)$ it is $H=SO(2)$, $(\Delta \bL)^2=\left\langle\bL^2\right\rangle- \left\langle \bL\right\rangle^2$ (with \ $L_i\equiv\varepsilon^{ijk}L_{jk}/2$),  \ and 
%we have shown that 
minimizing $(\Delta \bm{L})^2$ amounts to saturating a specific UR \cite{FioPis19LMP}
(hence also
property 1 holds); this SCS consists of the socalled {\it coherent spin} %\cite{Rad71} 
or {\it Bloch} states.

In introducing SCS  on $S^d_\Lambda$ ($d=1,2$) we follow in spirit Perelomov's approach, with $G$ the isometry group $O(D)$ of $S^d$ (a compact group). However, our Hilbert space 
$\mathcal{H}_{\Lambda}$
will in general carry a {\it reducible} representation  of $O(D)$; moreover,
we study the localization properties of these SCS  both in configuration and  (angular) momentum space.

\section{Construction of the  $S^d_\Lambda$ for general $d\ge 1$}
\label{GenConst}
%\vspace{-0.5cm}
%\begin{figure}[htbp]
%\includegraphics[width=8cm]{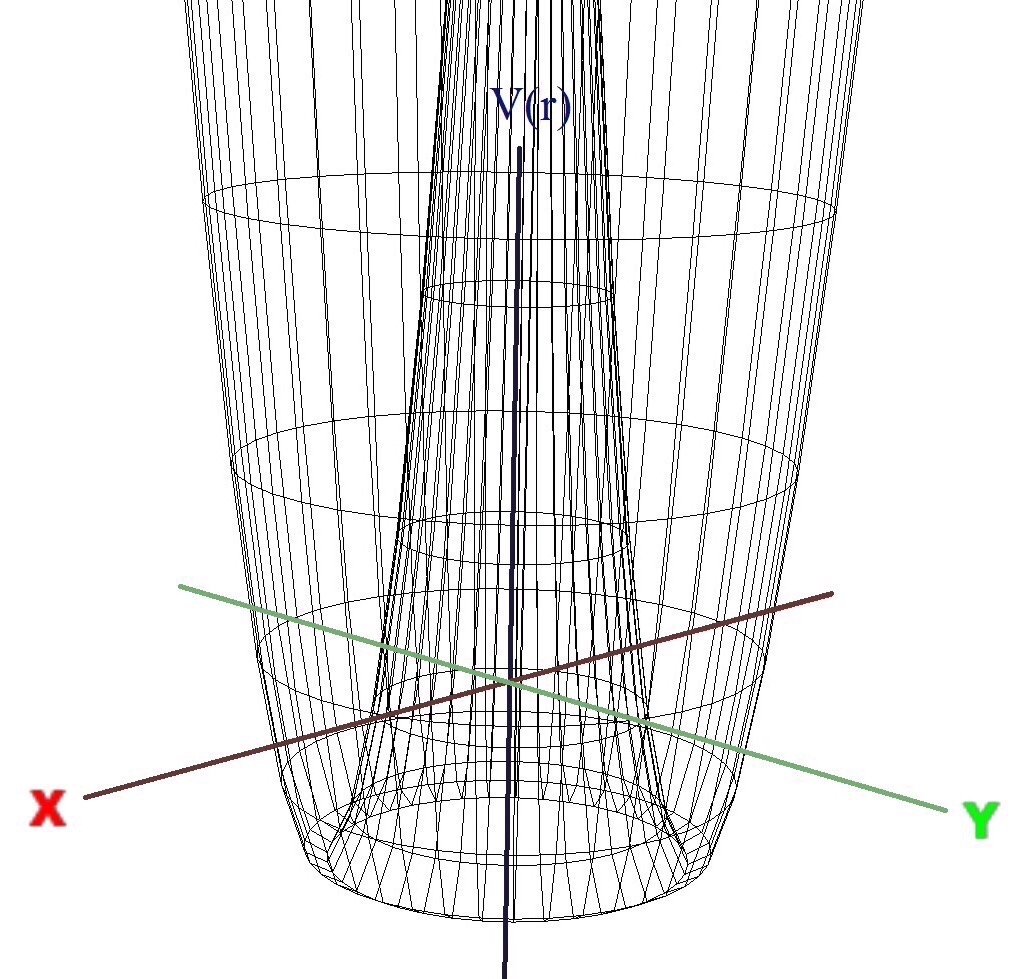}
%\caption{Three-dimensional plot of $V(r)$}\label{fig1}
%\end{figure}
  \begin{figure}[t]
        \begin{minipage}[c]{.48\textwidth}
          \includegraphics[scale=0.19]{IMG_1830.JPG}
          \caption{Three-dimensional plot of $V(r)$}
        \end{minipage}%
        \hspace{5mm}%
        \begin{minipage}[c]{.48\textwidth}
\begin{center}          \includegraphics[scale=0.35]{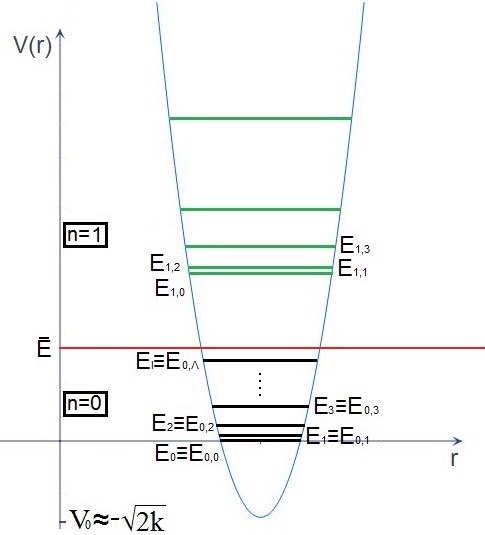}
          \caption{Two-dimensional plot of $V(r)$ including the energy-cutoff
and allowed energy levels (black).}
\end{center}        \end{minipage}
\label{fig2}
      \end{figure}

The main steps of the costructions are as follows:

\begin{itemize}

\item We adopt a $O(D)$ invariant Hamiltonian 
\begin{equation}
H=-\frac{1}{2}\Delta+V(r),
\end{equation}
where the confining potential $V(r)$ has a very sharp minimum $V_0=V(1)$ at $r=1$.
More precisely, we assume that 
\be
V(r)\simeq V_0+2k\left(r\!-\!1\right)^2\qquad\mbox{if } V(r)\leq \overline{E},   \label{cond}
\ee
so that $V(r)$ has a harmonic behavior for \ $\left\vert r\!-\!1\right\vert\!\leq\!\sqrt{\!\frac{\overline{E}\!-\!V_0}{2k}}$, and that $V''(1)\!\equiv\! 4k \!\gg\! 0$ 
($k$ thus parametrizes the sharpness of the minimum); we fix $V_0$ so that the ground state has energy $E_0=0$. Using polar coordinates we can decompose $\Delta=\partial_r^2+
\frac dr\partial_r-\frac 1{r^2}\bL^2$, where  $\bL^2:=L_{ij}L_{ij}/2$
is the square angular momentum [$L_{ij}:=i(x_j \frac{\partial}{\partial x_i}\!-\!x_i \frac{\partial}{\partial x_j})$ are the angular momentum components], i.e.
%$\left\{j\left(j+d-1\right)\right\}_{j=0}^{\Lambda}$ 
the Hamiltonian of free motions (the Laplacian) on $S^d$.
Looking for the eigenfunctions $\psi$ of $H$ in the form $\psi=f(r)Y(\varphi,...)$,
where $\varphi,...$ are the angular coordinates,
we reduce the eigenvalue equation $H\psi=E\psi$ to a 1-dimensional Schr\"odinger equation
in the form of an ordinary differential equation
 with respect to $r$. The eigenvalues are parametrized by integers $l,n\ge 0$; 
they respectively determine the eigenvalue \ $E_l\equiv l(l\!+\!d\!-\!1)$ \ of $\bL^2$ and the radial excitation,
which at least for small $n$ are approximately of harmonic type, \
$\simeq \sqrt{8}kn$. 

\item We choose $\overline{E}$ low enough, e.g. $\overline{E}\lesssim\sqrt{8}k$,  to constrain $n$ to be zero, namely to {\it eliminate radial excitations}
from the spectrum $\Sigma_{\overline{H}}$ of $\overline{H}$, 
so that the latter reduces to that of  $\overline{\bL^2}$, \  $\Sigma_{\overline{H}}=\{E_l\}$.
Then we also find that the {\bf $x_i$ generate the whole algebra of observables 
$\overline{\A}$}, and \   $[\overline{x_i},\overline{x_j}]\simeq - iL_{ij}/k$,  \
i.e. we find   Snyder-type commutation relations among the coordinates\footnote{Snyder's quantized spacetime algebra is generated by 4 hermitean Cartesian coordinate operators $\left\{x^\mu\right\}_{\mu=0,1,2,3}$, 
 and 4 hermitean momentum operators $\left\{p_\mu\right\}_{\mu=0,1,2,3}$  fulfilling (here $\alpha$ is a suitable constant)
\begin{equation}
[p_\mu,p_\nu]=0, \qquad
[x^\mu,p_\nu]=i\hbar(\delta^\mu_\nu-\alpha p^\mu p_\nu), \qquad [x^\mu,x^\nu]=-i\hbar\alpha L^{\mu\nu},\qquad\qquad
\mu,\nu=0,1,2,3
\end{equation}
where  $L^{\mu\nu}
=x^\mu p^\nu-x^\nu p^\mu$ and
$v^\mu=\eta^{\mu\nu}v_\nu$, with $\eta=\mbox{diag}(1,-1,-1,-1)=\eta^{-1}$
the Minkowski metric matrix.
%These relations are invariant under inversion of the axes, 
%in particular under parity. The $L^{\mu\nu}$ span the Lorentz Lie algebra,
%and their commutation relations with the 4-vectors $p^\mu,x^\mu$
%are as on the Minkowski space.
}. 
There is a residual freedom in the choice of 
$V(r)$ (the higher order terms in the Taylor expansion 
of $V(r)$ around $r=1$); we fine-tune the model requiring that  
$[\overline{x_i},\overline{x_j}]= - iL_{ij}/k$ (up to terms that act non-trivially
only on the highest energy states).

\item To obtain a sequence of finite-dimensional models going to QM on $S^d$
we make $\overline{E}$ grow and diverge with a natural number $\Lambda$; so must
also $k$ do, in order that the above inequality keeps holding. We choose \ $\overline{E}\equiv E_{\Lambda}=\Lambda(\Lambda\!+\!d\!-\!1)$ 
and $V$ depending on $\Lambda$ so that $k(\Lambda) \ge\Lambda^2(\Lambda\!+\!1)^2$; 
\ correspondingly, $\Sigma_{\overline{H}}=\{E_l\}_{l=1}^\Lambda$, and replacing everywhere the bar by the subscript $\Lambda$ we find
\bea
(\Hi_\Lambda,\A_\Lambda) \stackrel{\Lambda\to\infty}{\longrightarrow}
\left(\Hi,\A\right)\equiv \left({\cal L}^2(S^d),\,\mbox{Lin}\!\left({\cal L}^2(S^d)\right)\right) \label{LambdaLimit}
\eea
in a suitable sense \cite{FioPisJGP18}.
$\{S^d_\Lambda\}_{\Lambda\in\NN}\equiv \{(\Hi_\Lambda,\A_\Lambda)\}_{\Lambda\in\NN}$ is our  {\bf $d$-dimensional, $O(D)$-covariant fuzzy sphere},  
i.e. a sequence of finite-dimensional approximations of ordinary QM on 
$S^d$.

 \end{itemize}

\smallskip
It turns out that (at least for $D=2,3$) there exist $O(D)$-covariant $*$-algebra isomorphisms 
\ $\A_{\Lambda}%\simeq M_N(\CC)
\simeq\pi_\Lambda[Uso(D+1)]$, \
where $(\pi_{\Lambda},\Hi_\Lambda)$ is a suitable irreducible  unitary representation  of $Uso(D+1)$. More precisely, in terms of the canonical basis \
$\{{\sf L}_{IJ}\,| \, 1\le I<J\le D\!+\!1\}$ \ of $so(D+1)$, 
\be
\overline{L_{ij}}=\pi_{\Lambda}({\sf L}_{ij}),\qquad
\overline{x_h}=\pi_{\Lambda}\left[f_1\!\left(\bL^2\right)\,{\sf L}_{h({D+1})}
\,f_2\!\left(\bL^2\right)\right],\qquad 1\le i,j,h\le D, \quad i<j,    \label{isomD}
\ee
where and $f_1(s),f_2(s)$ are suitable analytic functions. 
%(check o di $\lambda=\sqrt{\bL^2}$?)

\smallskip
To simplify the notation, below we shall remove the bar and denote the generic  $\overline{A}\in\mathcal{A}_{\Lambda}$   as $A$.

\section{$D=2$: $O\left(2\right)$-covariant fuzzy circle}
\label{D2}

In a suitable orthonormal basis $\B_{\Lambda}:=\{\psi_\Lambda,\psi_{\Lambda-1},...,\psi_{-\Lambda}\}$ of the Hilbert space $\Hi_\Lambda$ consisting of eigenvectors of the angular momentum $L\equiv L_{12}$,
\be
L\, \psi_n=n\,\psi_n,
\ee
the action of the noncommutative coordinates \ $x_\pm:=x_1 \pm ix_2$
 \ of the fuzzy circle $S^1_\Lambda$ read\footnote{Here we use the conventions 
of \cite{FioPis19JPA,FioPis19LMP}, rather than those of   \cite{FioPisJGP18}.}
\bea
%L \psi_n=n\psi_n, \qquad  
x_{\pm}\,\psi_n \: =\:
\left\{ \begin{array}{ll}\!\! \displaystyle
 \left[1\!+\!\frac{n(n\pm1)}{2k}\right]\psi_{n\pm1} \qquad &
\mbox{if }-\!{{\Lambda}}\leq \pm n\leq{{\Lambda}}\!-\!1, \\[8pt]
0 & \mbox{otherwise,}
\end{array}\right.            \label{defLxiD2}
\eea
where $k=k(\Lambda)\ge\Lambda^2(\Lambda\!+\!1)^2$. In the $\Lambda=\infty$ limit
$x_\pm=e^{\pm i\varphi}$, $\psi_n=e^{in\varphi}$ (up to a phase); $\varphi$ is the angle along $S^1$. 
$L ,x_+,x_-$ and $\bx^2:=x_1 ^2+x_2^2=\frac 12(x_+x_-+x_-x_+)$ fulfill the $O(2)$-equivariant  relations
\be
\left[L , x_{\pm}\right]=\pm x_\pm,\quad x_+{}^\dagger=x_-, \qquad L ^\dagger=L,\label{commrelD=2'}
\ee
\be
\left[x_+,x_-\right]={\color{red}{ -\frac{2L }k+\left[1\!+\!\frac {\Lambda(\Lambda\!+\!1)}k\right]\!\left(\widetilde P_{\Lambda}\!-\!\widetilde P_{-\Lambda}\right)\equiv L'}},\label{y+y-}
%\left[x_+,x_-\right]=-\frac{2L }k+\left[1\!+\!\frac {\Lambda(\Lambda\!+\!1)}k\right]\!\left(\widetilde P_{\Lambda}\!-\!\widetilde P_{-\Lambda}\right),\label{y+y-}
\ee
\be
\bx^2=  1 \:{\color{red}{ +\frac{L ^2}{k} -
\left[1\!+\!\frac {\Lambda(\Lambda\!+\!1)}{k}\right]
\frac{\widetilde P_{\Lambda}\!+\!\widetilde P_{-\Lambda}}2}},          \label{defR2D=2}
%\bx^2=  1+\frac{L ^2}{k} -\left[1\!+\!\frac {\Lambda(\Lambda\!+\!1)}{k}\right]
%\frac{\widetilde P_{\Lambda}\!+\!\widetilde P_{-\Lambda}}2,          \label{defR2D=2}
\ee
\be
\prod\limits_{n=-\Lambda}^{\Lambda}\!\!\left(L \!-\!n\,I\right)=0, \qquad \left(x_\pm\right)^{2\Lambda+1}=0 .\label{commrelD=2}
\ee
Here $\widetilde{P}_n$ is the projection onto the $1$-dim subspace  
$\CC\psi_n$. Terms marked red are absent in the commutative case. In the $\Lambda\to\infty$
limit also the non-vanishing ones will play no role at any fixed energy $E$, as they are proportional to the projections $\widetilde P_{\pm\Lambda}$
onto the states with highest energy $E_{\Lambda}\to\infty$; 
(\ref{commrelD=2}a) gives back  $\Sigma_{L}=\ZZ$, whereas (\ref{commrelD=2}b) looses meaning and must be dropped.
We point out that: 

\begin{itemize}

\item $\bx^2\neq 1$, but it is a function of $L^2$, hence the $\psi_n$ are
its eigenvectors; its eigenvalues  
(except on $\psi_{\pm\Lambda}$)
are close to 1, slightly grow with $|n|$ and collapse to 1 as $\Lambda\to \infty$.

\item The ordered monomials  $x_+^h\,L ^l\, x_-^n$ [with degrees $h,l,n$
bounded by (\ref{commrelD=2'})-\ref{commrelD=2}] make up a 
 basis of the $(2\Lambda\!+\!1)^2$-dim  vector space 
underlying the algebra of observables  \ $\A_\Lambda\!:=\!End(\Hi_\Lambda\!)$
\ (the $\widetilde{P}_n$ themselves can be expressed as polynomials in $L $). 

\item $x_+ ,x_-$ generate the whole $*$-algebra $\A_\Lambda$,
because also $L $ can be expressed as a non-ordered polynomial in $x_+,x_-$. 

\item As anticipated in (\ref{isomD}), actually  there are $O(2)$-equivariant
$*$-algebra isomorphisms $\A_{\Lambda}$ 
 \be
\A_{\Lambda}\simeq M_N(\CC)\simeq\pi_\Lambda[Uso(3)], \qquad 
N=2\Lambda\!+\!1,                                 \label{isomD2}
\ee
where $\pi_{\Lambda}$ is the $N$-dimensional unitary irreducible representation  of $Uso(3)$.
The latter is characterized by the condition $\pi_{\Lambda}(C)=\Lambda(\Lambda+1)$, where  $C=E_aE_{-a}$ is the Casimir (sum over $a\in\{+,0,-\}$), and $E_a$
make up the Cartan-Weyl basis of $so(3)$,
\be
[E_+,E_-]=E_0,\qquad [E_0,E_\pm]=\pm E_\pm,\qquad E_a^\dagger=E_{-a}.
 \label{su2rel}
\ee
In fact we can realize \  $L ,x_+, x_-$ \ by setting \cite{FioPisJGP18}
(we simplify the notation dropping $\pi_{{{\Lambda}}}$)
\be
\ba{c}
L=E_0, \qquad  x_\pm=f_{\pm}(E_0)E_\pm,\\[10pt]
\displaystyle  f_{+}(s)=\sqrt{\frac{1\!+\!s(s\!-\!1)/k}{{{\Lambda}}({{\Lambda}}+1)\!-\!s(s\!-\!1)}}= f_{-}(s-1),
\ea\label{transfD2}\
\ee
i.e. in a sense the $x_\pm$ are $E_\pm$ (which play the role of $x_\pm$ in  Madore FS) squeezed in the $E_0$ direction; one can easily check (\ref{commrelD=2'}-\ref{commrelD=2})  using (\ref{azioneL}), 
with $L_a,l,m$ resp.  replaced   by $E_a,\Lambda,n$.
Hence \ $\pi_{\Lambda}(E_+),\pi_{\Lambda}(E_-)$ \  are generators of 
$\A_\Lambda$ alternative to \ $x_+,x_-$.

\item
The group $Y_\Lambda\simeq SU(2\Lambda\!+\!1)$ of $*$-automorphisms  of $\A_\Lambda$ is inner and includes a subgroup $SO(3)$ {\it independent of $\Lambda$}
(acting irreducibly via $\pi_\Lambda$) and a subgroup $O(2)\subset SO(3)$  corresponding to orthogonal transformations (in particular, rotations) of the coordinates $x_i$, which plays the role of isometry group of $S^1_\Lambda$.

\end{itemize}
As in the commutative case we define \ $\left\langle\bx\,\right\rangle^2\!:=\!\langle x_1\rangle^2\!+\!\langle x_2\rangle^2$ and find 
$\left\langle\bx\,\right\rangle^2\!=\!\langle x_+\rangle\langle x_-\rangle\!=\!|\langle x_+\rangle|^2$.

\subsection{Diagonalization of the coordinates $x_i$ on $S^1_\Lambda$}
\label{circlespec}

\noindent
As said, by $O(2)$-covariance $\Sigma_{x_i}\left(\Lambda\right)=\Sigma_{x_1}\left(\Lambda\right)$ for all $i$, so we can study just the spectrum $\Sigma_{x_1}\left(\Lambda\right)$. 
$L$ is invariant under $2$-dimensional rotations, whereas $L\to -L$ under $x_1$- or $x_2$-inversion. 
On the basis $\B_{\Lambda}$ the operator
$x_1$ is represented by the $(2\Lambda\!+\!1)\times(2\Lambda\!+\!1)$ symmetric tridiagonal matrix  
$$
X(\Lambda)=\frac 12\left(
\begin{array}{cccccccc}
0&b_{\Lambda} &0&0&\dots&0&0&0\\
b_{\Lambda}  &0&b_{\Lambda-1} &0&\dots &0&0&0\\
0&b_{\Lambda-1} & 0&b_{\Lambda-2} &\dots&0&0&0\\
\vdots&\vdots&\vdots&\vdots&\ddots&\vdots&\vdots&\vdots\\
\vdots&\vdots&\vdots&\vdots&\ddots&\vdots&\vdots&\vdots\\
\vdots&\vdots&\vdots&\vdots&\ddots&\vdots&\vdots&\vdots\\
%0&0&0&0&\cdots&a&b&0\\
0&0&0&0&\cdots&b_{2-\Lambda} &0&b_{1-\Lambda} \\
0&0&0&0&\cdots&0&b_{1-\Lambda} &0 
\end{array}
\right)         =X^0(\Lambda)+O\!\left(\!\frac {1}{\Lambda^2}\!\right).
$$
Here  $b_n\equiv\sqrt{1\!+\! n(n \!-\! 1)/k}$, \ and $X^0$ is the $k\to\infty$ limit
of $X$, i.e. is obtained replacing all $b_n$ by $1$. The eigenvectors and eigenvalues
of Toeplitz matrices such as $X^0$ are known (see e.g. \cite{Noschese} p. 2-3)
and are good approximations of those of $x_1$; in \cite{FioPis19JPA} we have studied
the latter estimating the needed corrections. The spectrum of $X^0(\Lambda)$ arranged in descending order is \
$\Sigma_{X_0}:=\left\{\widetilde{\alpha}_h(\Lambda)\right\}_{h=1}^{N}$, \ where
\be\label{valuecos}
\widetilde{\alpha}_h=\cos{\left(\frac{h \pi}{N+1} \right)},
%,\quad h=1,2,\cdots,2\Lambda+1.
\ee
and $N=2\Lambda+1$. We shall arrange also the spectrum $\Sigma_{X(\Lambda)}=\left\{\alpha_h(\Lambda)\right\}_{h=1}^{2\Lambda+1}$ of $x_1\simeq X$  in decreasing order. 
Improving the results of Theorem 3.1  in \cite{FioPis19JPA}, here we prove

\begin{teorema}   For all $\Lambda\in\NN$ 
\begin{enumerate}

\item If $\alpha$ belongs to $\Sigma_{X}$, then also $-\alpha$ does.

\item $\Sigma_{X(\Lambda)},\Sigma_{X(\Lambda+1)}$ interlace, i.e. between any two
 consecutive eigenvalues of $X(\Lambda\!+\!1)$ there is exactly one of $X(\Lambda)$  (see fig. \ref{InterLace}):
\be 
\alpha_1\left(\Lambda\!+\!1\right)\: >\:\alpha_1\left(\Lambda\right)\: >\:\alpha_2\left(\Lambda\!+\!1\right)
\: >\:\alpha_2\left(\Lambda\right)\: >...\: >\:\alpha_{\Lambda}\left(\Lambda\right)\: >\:\alpha_{\Lambda+1}\left(\Lambda\!+\!1\right);
\nonumber
\ee

\item $\Sigma_{X}$ becomes uniformly dense in $[-1,1]$ as $\Lambda\to \infty$, in particular \
$\displaystyle \alpha_1\left(\Lambda\right)\geq 1-\frac{\pi^2}{8(\Lambda+1)^2}$.

\end{enumerate}
\label{DiagD2}
\end{teorema}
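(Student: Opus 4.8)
The three assertions can be treated separately. \emph{Assertion 1} is immediate: conjugating $X(\Lambda)$ by the diagonal unitary acting as $(-1)^n$ on $\psi_n$ — equivalently, implementing the reflection $x_1\!\mapsto\!-x_1$, $\psi_n\!\mapsto\!\psi_{-n}$ of $O(2)$ — reverses the sign of every off-diagonal entry while fixing the vanishing diagonal, so $X(\Lambda)\simeq-X(\Lambda)$, whence $\Sigma_X=-\Sigma_X$; as $N=2\Lambda\!+\!1$ is odd this also forces $\alpha_{N+1-h}=-\alpha_h$ and $0\in\Sigma_X$. \emph{Assertion 3} I would prove by comparing with $X^0(\Lambda)$: since $n(n\!-\!1)\geq0$ for all $n\in\ZZ$ we have $b_n\geq1$, so $X(\Lambda)$ and $X^0(\Lambda)$ have nonnegative entries with $X(\Lambda)\geq X^0(\Lambda)$ entrywise, and by the monotonicity of the largest eigenvalue under entrywise domination of nonnegative symmetric matrices $\alpha_1(\Lambda)\geq\alpha_1\big(X^0(\Lambda)\big)=\cos\tfrac{\pi}{N+1}=\cos\tfrac{\pi}{2(\Lambda+1)}\geq1-\tfrac{\pi^2}{8(\Lambda+1)^2}$, using $\cos\theta\geq1-\theta^2/2$. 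For the uniform density, $X(\Lambda)-X^0(\Lambda)$ is tridiagonal with off-diagonal entries $\tfrac12(b_n\!-\!1)\leq\tfrac{n(n-1)}{4k}=O(\Lambda^{-2})$ (as $k\geq\Lambda^2(\Lambda\!+\!1)^2$), hence $\|X(\Lambda)-X^0(\Lambda)\|\to0$; by Weyl's inequality $\alpha_h(\Lambda)$ differs from $\widetilde\alpha_h(\Lambda)=\cos\tfrac{h\pi}{N+1}$ by $O(\Lambda^{-2})$ uniformly in $h$, and since these $\widetilde\alpha_h(\Lambda)$ become uniformly dense in $[-1,1]$ (equidistribution of $h\pi/(N\!+\!1)$ in $[0,\pi]$ and uniform continuity of $\cos$), so does $\Sigma_X$; together with $\alpha_1\to1$ and $\alpha_N=-\alpha_1\to-1$ this is assertion 3.

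\emph{Assertion 2} is the substantive one. The matrix elements of $x_1$ in $\B_{\Lambda}$ depend only on $n$ (and on $k$), so for a fixed $k$ the matrix $X(\Lambda)$ is precisely the central $(2\Lambda\!+\!1)\times(2\Lambda\!+\!1)$ principal submatrix of $X(\Lambda\!+\!1)$, obtained by deleting the first and last row and column. Cauchy's interlacing theorem then gives $\alpha_h(\Lambda\!+\!1)\geq\alpha_h(\Lambda)\geq\alpha_{h+2}(\Lambda\!+\!1)$, with the first inequality strict since the matrices involved are irreducible tridiagonal and hence have simple spectra. What remains is to replace $\alpha_{h+2}$ by $\alpha_{h+1}$, i.e.\ to prove $\alpha_h(\Lambda)>\alpha_{h+1}(\Lambda\!+\!1)$ for $h=1,\dots,\Lambda$. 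Here I would use the persymmetry $b_n=b_{1-n}$: both $X(\Lambda)$ and $X(\Lambda\!+\!1)$ commute with the flip $\psi_n\!\mapsto\!\psi_{-n}$ and so split, in the even/odd basis, into Jacobi blocks $X^{\pm}$ with vanishing diagonal, where $X^{-}(\Lambda)$ is $X^{+}(\Lambda)$ with its first row and column removed and $X^{\pm}(\Lambda\!+\!1)$ is $X^{\pm}(\Lambda)$ bordered by one further row and column at the outer end. The alternating chain for $\Sigma_{X(\Lambda)},\Sigma_{X(\Lambda+1)}$ (stated in the theorem only down to the middle eigenvalue $0$, which the two matrices share) then results from splicing the strict one-step interlacings — consecutive truncations and one-sided borderings of Jacobi matrices — among $X^{\pm}(\Lambda),X^{\pm}(\Lambda\!+\!1)$; the only comparisons not of this kind are those between the equally sized $X^{+}(\Lambda)$ and $X^{-}(\Lambda\!+\!1)$, and likewise $X^{-}(\Lambda)$ against $X^{+}(\Lambda\!+\!1)$, which must be done by hand, viewing each pair as two one-state borderings, at opposite ends, of a common smaller Jacobi matrix.

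A probably cleaner alternative is a continuity argument: set $b_n(t)=\sqrt{1+t\,n(n\!-\!1)/k}$, $t\in[0,1]$, so at $t=0$ both matrices are tridiagonal Toeplitz with eigenvalues $\cos\tfrac{h\pi}{N+1}$, for which the interlacing is an elementary trigonometric inequality; for every $t$ the submatrix relation keeps the Cauchy bounds, and away from the shared eigenvalue $0$ the two spectra stay disjoint, so the strict interlacing among the positive (hence, by assertion 1, also the negative) eigenvalues cannot degenerate as $t$ runs from $0$ to $1$. Either way, the main obstacle is the same: Cauchy's theorem is one eigenvalue short of the statement, and recovering the missing strict inequality genuinely requires the persymmetric structure (or the exact solvability of the $k\!\to\!\infty$ limit), with the handful of same-size Jacobi comparisons / disjointness statements being the point that needs real care rather than formal interlacing.
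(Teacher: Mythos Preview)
Your arguments for assertions~1 and~3 are correct and self-contained; the paper simply refers these to \cite{FioPis19JPA}. The Perron--Frobenius monotonicity for the lower bound on $\alpha_1$ and the Weyl-inequality comparison with $X^0$ for uniform density are both clean.

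For assertion~2 the paper's own proof is literally unfinished (the text reads ``\emph{We prove item~2 with the help of Proposition~\ref{interlace} (?) COMPLETARE}''). The intended tool is the appendix proposition, which establishes strict one-step interlacing for a nested sequence of Jacobi matrices via the three-term recursion $p_{n+1}(\alpha)=\alpha\,p_n(\alpha)-|a_n|^2 p_{n-1}(\alpha)$ of characteristic polynomials --- a different mechanism from the Cauchy/minimax interlacing you invoke. Your diagnosis that Cauchy alone is one step short (central block, two rows/columns removed) is exactly the obstruction also facing a na\"{\i}ve application of Proposition~\ref{interlace}: passing through the intermediate $(2\Lambda\!+\!2)$-dimensional matrix gives two strict one-step interlacings, but these do not by themselves order $\alpha_h(\Lambda)$ against $\alpha_{h+1}(\Lambda\!+\!1)$. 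Your parity-splitting route~(a) is therefore the right structural move, and it meshes naturally with the paper's proposition since $X^\pm(\Lambda)\hookrightarrow X^\pm(\Lambda\!+\!1)$ and $X^-(\Lambda)\hookrightarrow X^+(\Lambda)$ are genuine one-step borderings. You are also right that the residual same-size comparison ($X^+(\Lambda)$ versus $X^-(\Lambda\!+\!1)$, each a one-step bordering of $X^-(\Lambda)$ at the \emph{opposite} end) is not a formal consequence of interlacing and is where the real work sits; the paper does not carry it out either.

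Route~(b) has a gap: the deformation argument needs the positive spectra of $X(\Lambda,t)$ and $X(\Lambda\!+\!1,t)$ to stay disjoint for every $t$, and you give no mechanism preventing a crossing; Cauchy's bounds do not forbid $\alpha_h(\Lambda,t)=\alpha_{h+1}(\Lambda\!+\!1,t)$.

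One further point both you and the paper pass over: in the model $k=k(\Lambda)$ varies with $\Lambda$, so strictly speaking $X(\Lambda)$ is \emph{not} a principal submatrix of $X(\Lambda\!+\!1)$ --- the entries $b_n=\sqrt{1+n(n\!-\!1)/k}$ shift. Every interlacing argument above, and Proposition~\ref{interlace} itself, tacitly assumes a common~$k$.
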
 

\begin{proof} 
Items 1.,  3. were proved in  Theorem 3.1  in \cite{FioPis19JPA}. 
 Item  1. follows also from Proposition \ref{interlace}, after  the inversion $X_m\mapsto-X_m$.
We prove item  2.with the help of Proposition \ref{interlace} (?) COMPLETARE.
\end{proof}

Note that item 2. implies in particular that all  eigenvalues are simple.

In the $\Lambda\to\infty$ the eigenvectors of $x_1$ become generalized eigenvectors,
as expected.

\subsection{\small $O(2)$-covariant uncertainty relations and $O(2)$-invariant strong SCS systems}
\label{circleCS1}

\noindent
From (\ref{commrelD=2'}) one can derive \cite{FioPis19LMP}
for both $S^1,S^1_\Lambda$ the $O(2)$-covariant  `Heisenberg' uncertainty relations  
\bea
%\mbox{HUR:}\quad 
\Delta L\,\Delta x_1\ge \frac {|\langle x_2\rangle|}2,\qquad 
\Delta L\,\Delta x_2\ge \frac {|\langle x_1\rangle|}2,\qquad 
\Delta L^2(\Delta \bx)^2\ge \frac {\langle \bx\rangle^2}4;       \label{HURS^1}
\eea
they are saturated by the $\psi_n$ ($\Delta L=0$). We have  also 
shown that $\Delta x_1,\Delta x_2$  may vanish separately, but not simultaneously, because
\bea
(\Delta\bx)^2\ge(\Delta\bx)^2_{min}\sim {\color{red}{ \frac1{\Lambda^2} }}.
%, \qquad \Delta x_1\, \Delta x_2 \ge  {\color{red}{\frac 12 \left\langle L'\right\rangle}}
%\quad \Leftarrow \:(\ref{y+y-}).
\eea

\medskip
\noindent
{\bf Theorem} (section 3.1 in \cite{FioPis19LMP}) \ {\it
  The system \
${\cal S}^\beta\equiv\left\{\bomega_\alpha^\beta\equiv\!\!\sum\limits_{n=-\Lambda}^{\Lambda}
\!\!\frac{e^{i(\alpha n+\beta_n)}}{\sqrt{2\Lambda\!+\!1}}\psi_n
%\,|\, \alpha\in \Omega\equiv S^1
\right\}_{\!\!\alpha\in \RR/2\pi\ZZ}$ is a strong SCS,
\bea
%\mbox{is a strong SCS:}\qquad
\frac{2\Lambda\!+\!1}{2\pi}\int_0^{2\pi}\!\!\!\! d\alpha \,P^\beta_\alpha=\id,
\qquad P_\alpha^\beta\equiv\bomega_\alpha^\beta\langle\bomega_\alpha^\beta,\cdot\rangle,
%\qquad \bomega_\alpha^\beta\equiv\sum_{n=-\Lambda}^{\Lambda}
%\!\!\frac{e^{i(\alpha n+\beta_n)}}{\sqrt{2\Lambda\!+\!1}}\psi_n.      
%\nonumber
\qquad\label{IdResolS^1_L} 
\eea
for all $\beta\in(\RR/2\pi\ZZ)^{2\Lambda\!+\!1}$ (the label space is
$\RR/2\pi\ZZ\simeq S^1\equiv \Omega$). It
is fully $O(2)$-covariant if $\beta_{-n}=\beta_n$. \
On all $\bomega_\alpha^\beta$ it is $\langle L\rangle=0$, $\left(\Delta L\right)^2=\frac {\Lambda(\Lambda\!+\!1)}{3}$, whereas 
$(\Delta \bx)^2$ is  minimized by the $\bphi_\alpha\equiv\bomega_\alpha^0$, with
\bea
\left(\Delta\bx\right) ^2 <\frac 1{\Lambda+1}
\left(\frac 12+\frac{1}{3\Lambda}\right).
%\overset{\Lambda\ge 2}\le  \frac 2{3(\Lambda+1)}     
\label{utileb} 
 \eea
% in particular $\langle x_2\rangle_{\bphi}=0$,  
%$\langle x_1\rangle_{\bphi}=\langle x_+\rangle_{\bphi}\in\RR$,
%where $\bphi:=\bphi_0=\bomega_0^0$. 
}
Within the class  of strong SCS, the
$\bphi_\alpha$ are closest to classical states(=points)  of $S^1$, and in one-to-one
correspondence with them: \
$S^1\leftrightarrow {\cal S}^1\equiv\{\bphi_\alpha\}_{\alpha\in\RR/2\pi\ZZ\simeq S^1\equiv \Omega}$.

\subsection{$O(2)$-invariant weak SCS  minimizing $(\Delta \bm{x})^2$}
\label{circleCS2}

Since $(\Delta \bx)^2$ is $O(2)$-invariant, so is the set ${\cal W}^1$ of states 
minimizing it; ${\cal W}^1$ is a weak SCS. We can recover the whole set from
any element  $\underline{\bm{\chi}}$ through rotations, \
${\cal W}^1=\left\{\underline{\bm{\chi}}_\alpha\!\equiv e^{i\alpha L}\underline{\bm{\chi}} \right\}_{\!\alpha\in[0,2\pi[}$. Choosing $\underline{\bm{\chi}}$ so that
 $\la x_2\ra_{\underline{\bm{\chi}}}=0$, by (\ref{splitH})  we find
 $\la \bx\ra_{\underline{\bm{\chi}}_\alpha}= \left|\la \bx\ra_{\underline{\bm{\chi}}}\right|\bu_\alpha$, where $\bu_\alpha=(\cos\alpha,\sin\alpha)$.
We have shown that
\bea
0<\left(\Delta{\bx}\right)^2_{min}=\left(\Delta{\bx}\right)^2_{\underline{\bm{\chi}}_\alpha}< \frac{3.5}{(\Lambda+1)^2}.            \label{Deltax2qminS^1_L}
\eea
The (rays associated to) 
$\underline{\bm{\chi}}_\alpha$ are closest to classical states(=points) of $S^1$, 
and in one-to-one correspondence with them: \
$S^1\leftrightarrow {\cal S}^1\equiv\{\bphi_\alpha\}_{\alpha\in\RR/2\pi\ZZ\simeq S^1}$.

\section{D=3:$O\left(3\right)$-covariant fuzzy sphere}
\label{D3}

We use two related sets of  angular momentum and space coordinate operators: the hermitean ones \
$\left\{L_i\right\}_{i=1}^3$  (with  $L_i\equiv\varepsilon^{ijk}L_{jk}/2$) \ and $\left\{x_i\right\}_{i=1}^3$,  and the partly
hermitean conjugate ones \
$\left\{L_a\right\}$, $\left\{x_a\right\}$ (here $a=0,+,-$), \ which are obtained from the former as follows\footnote{Again, here we use the conventions 
of \cite{FioPis19JPA,FioPis19LMP}, rather than those of   \cite{FioPisJGP18}.}:
$$
L_{\pm}:=L_1\pm iL_2,\quad\quad L_0:=L_3,\quad\quad x_{\pm}:=x_1\pm ix_2,\quad\quad x_0:=x_3.
$$
The square distance from the origin can be expressed as $\bx^2:=x_{i}x_{i}=x_0^2+(x_+x_-+x_-x_+)/2$.
As a preferred orthonormal basis of the carrier Hilbert space $\Hi_\Lambda$ we adopt one 
$\B_\Lambda$ consisting of eigenvectors of \ $L_3$, \  $\bL^2=L_{i}L_{i}=L_0^2+(L_+L_-+L_-L_+)/2$,
\be \label{D=3Basis}
\B_\Lambda:=\left\{\psi_l^m\right\}_{l=0,1,...,\Lambda; \ m=-l,...,l},\qquad
\bL^2\psi_l^m=l(l+1)\psi_l^m,\qquad L_3\psi_l^m=m\psi_l^m.
\ee
On the $\psi_l^m$ the $L_a,x_a$ act as follows:
\be\label{azioneL}
L_0\psi_l^m=m\,\psi_l^m,\quad
L_{\pm}\psi_l^m=\sqrt{(l\!\mp\! m)(l\!\pm\! m\!+\!1)}\psi_l^{m\pm 1}\!,
\ee
\bea\label{azionex}
x_a\psi_{l}^m=\left\{\!\!
\ba{ll}
c_l A_{l}^{a,m}\psi_{l-1}^{m+a}+
c_{l+1} B_{l}^{a,m} \psi_{l+1}^{m+a
}&\mbox{ if }l<\Lambda,\\[8pt]
c_lA_l^{a,m}\psi_{\Lambda-1}^{m+a}&\mbox{ if }
%-l\leq m\leq l-2\mbox{ and }
l=\Lambda,\\[8pt]
0&\mbox{otherwise,}
\ea
\right. 
\eea
where 
\bea 
\label{Clebsch}
\ba{l}
\displaystyle A_l^{0,m}:=\sqrt{\frac{(l+m)(l-m)}{(2l+1)(2l-1)}}\hspace{0.1cm},\hspace{1cm} A_l^{\pm,m}:=\pm\sqrt{\frac{(l\mp m)(l\mp m-1)}{(2l-1)(2l+1)}}\hspace{0.1cm}, \\
\displaystyle   B_l^{a,m}=A_{l+1}^{-a,m+a},\hspace{0.8cm}
c_l:= \sqrt{1+\frac{l^2}{k}}\quad 1\le l\le \Lambda,\hspace{0.6cm} c_0=c_{\Lambda+1}=0\hspace{0.1cm},           \ea
\eea
and $k(\Lambda)$ fulfills
$k(\Lambda)\ge\Lambda^2(\Lambda\!+\!1)^2  $.
The $L_i,x_i$  fulfill the following $O(3)$-covariant relations: 
\bea
 x_i^{\dag}=x_i, \qquad 
L_i ^{\dag}=L_i, \qquad [L_i,x_j]=i\varepsilon^{ijh}x_h, \qquad 
\left[\,L_i,L_j\right]=i\varepsilon^{ijh}L_h,\qquad x_iL_i=0, \label{rf3D4}\\[8pt]
[x_i,x_j]={\color{red}\underbrace{ i\varepsilon^{ijh}L_h\left(\!\!-\!\frac{1}{k}\!+\!K\widetilde{P}_{\Lambda}\!\right)}_{Snyder-like}},  \hspace{1.1cm}
\bx^2= 1\:
{\color{red}{+\frac{\bL^2\!+\! 1}{k}-\left[1+\frac{(\Lambda\!+\!1)^2}{k}\right]\!\frac{\Lambda\!+\!1}{2\Lambda\!+\!1}\widetilde P_{\Lambda}}},       \label{xx}\\
%&& [x_i,x_j]=i\varepsilon^{ijh}\left(\!-\frac{I}{k}+K\widetilde{P}_{\Lambda}\!\right)L_h, \hspace{1cm}
%\bx^2= 1+\frac{\bL^2\!+\!1}{k}-\left[1+\frac{(\Lambda\!+\!1)^2}{k}\right]\frac{\Lambda\!+\!1}{2\Lambda\!+\!1}\widetilde P_{\Lambda},   \qquad                           \label{xx}\\[4pt]
\prod_{l=0}^{\Lambda}\left[\bL^2-l(l+1)I\right] =0,\hspace{1.5cm}
\prod_{m=-l}^{l}{\left(L_3-mI\right)}\widetilde{P}_l=0,\hspace{1.5cm} \left(x_{\pm}\right)^{2\Lambda+1}=0; \label{rf3D3}
\eea
here $K=\frac{1}{k}+\frac{1+\frac{\Lambda^2}{k}}{2\Lambda+1}$, $\widetilde{P}_l$ is the projection on the  $\bL^2=l(l+1)$ eigenspace. Again, terms marked red are absent in the commutative case. In the $\Lambda\to\infty$
limit also the non-vanishing ones  will play no role at any fixed energy $E$, as they are proportional to the projection $\widetilde{P}_{\Lambda}$
onto the states with highest energy $E_{\Lambda}\to\infty$; 
(\ref{rf3D3}a,b) give back  the spectra of $\bL^2,L_3$ on ${\cal L}^2(S^2),{\cal L}^2(\RR^3)$, whereas (\ref{rf3D3}c) looses meaning and must be dropped.
We point out that: 
\begin{itemize} 

\item  $\bx^2\neq 1$; but it is a function of $\bL^2$, hence the $\psi_l^m$ are
its eigenvectors;  its eigenvalues  (except when $l=\Lambda$) are close to 1, slightly grow with $l$ and collapse to 1 as $\Lambda\to \infty$.

\item  The ordered monomials in $x_i,L_i$ [with degrees  bounded by (\ref{rf3D4}-\ref{rf3D3})] make up a 
 basis of the $(\Lambda\!+\!1)^4$-dim  vector space \ $\mathcal{A}_{\Lambda}\!:=\!End(\Hi_\Lambda\!)\!\simeq\! M_{(\Lambda+1)^2}(\CC)$, 
because the $\widetilde{P}_l$ themselves can be expressed as polynomials in $\bL^2$.

\item The $x_i$ {\it generate} the $*$-algebra $\mathcal{A}_{\Lambda}$, because
also the $L_i$ can be expressed as non-ordered polynomials in the $x_i$.

\item As anticipated in (\ref{isomD}), actually   there are 
$O(3)$-covariant $*$-algebra isomorphisms 
 \be
\A_{\Lambda}\simeq M_N(\CC)\simeq\bpi_\Lambda[Uso(4)], \quad N:=(\Lambda\!+\!1)^2.             \label{isomD3}
\ee
where $\bpi_{\Lambda}$ is the  $N$-dimensional unitary vector (and irreducible)  representation 
%$\bpi_{\Lambda}=\pi_{\frac{\Lambda}{2}}\otimes  \pi_{\frac{\Lambda}{2}}$   
of $Uso(4)%\simeq U\!su(2)\otimes U\!su(2)
$  on the Hilbert space 
${\bf V}_{\Lambda}%=V_{\frac{\Lambda}{2}}\otimes V_{\frac{\Lambda}{2}}
$  characterized by the conditions $\bpi_{\Lambda}(C)=\Lambda(\Lambda+2)$, 
$\bpi_{\Lambda}(C')=0$ on the quadratic Casimirs.
In terms of the Cartan-Weyl basis  $\{{\sf L}_{HI }\}$ ($H,I \in\{1,2,3,4\}$)  of $so(4)$,
\be
\left[{\sf L}_{HI  },{\sf L}_{J K }\right]=i\left(\delta_{HJ }{\sf L}_{I  K }-\delta_{HK }{\sf L}_{I  J }-\delta_{I  J }{\sf L}_{HK }+\delta_{I  K }{\sf L}_{HJ }\right), \qquad {\sf L}_{HI  }^\dagger={\sf L}_{HI  }=-{\sf L}_{I  H},
 \label{so4rel}
\ee
$C={\sf L}_{I  J }{\sf L}_{I  J }$,  $C'=\varepsilon^{HI  J K }{\sf L}_{HI }{\sf L}_{J K }$  (sum over repeated indices).
To simplify the notation we drop $\bpi_{\Lambda}$. 
In fact one can realize \  $L_i\, ,x_i$, \ $i\in\{1,2,3\}$, by  \ setting \cite{FioPisJGP18}
\be
\ba{rcl}
 L_i&=& \displaystyle\frac 1{2}\varepsilon^{ijk4}{\sf L}_{jk}, \qquad \qquad  x_i=
g^*(\lambda)\, {\sf L}_{4i}\,g(\lambda),\\[10pt]
 g(l) &=& \displaystyle\sqrt{
\frac{\Gamma\!\left(\frac {\Lambda\!+\!l}2\!+\!1\right)\Gamma\!\left(\frac {\Lambda\!-\!l\!+\!1}2\right)}
{\Gamma\!\left(\frac {\Lambda\!+\!1\!+\!l}2\!+\!1\right)\Gamma\!\left(\frac {\Lambda\!-\!l}2\!+\!1\right)}
\frac{\Gamma\!\left(\frac l2\!+\!1\!+\!\frac{i\sqrt{k}}2\right)\Gamma\!\left(\frac l2\!+\!1\!-\!\frac{i\sqrt{k}}2\right)}
{\sqrt{k}\:\Gamma\!\left(\frac {l\!+\!1}2\!+\!\frac{i\sqrt{k}}2\right)\Gamma\!\left(\frac {l\!+\!1}2\!-\!\frac{i\sqrt{k}}2\right)}}\\[22pt]
&=&  \displaystyle\sqrt{\frac{\prod_{h=0}^{l-1}(\Lambda\!+\!l\!-\!2h)}{\prod_{h=0}^l(\Lambda\!+\!l\!+\!1\!-\!2h)}
\prod_{j=0}^{\left[\frac{l\!-\!1}2\right]}\frac{1+\frac{(l\!-\!2j)^2}k}{1+\frac{(l\!-\!1\!-\!2j)^2}k}}\:\:;
\ea\label{transfD3}
\ee
here we have introduced the operator $\lambda:=[\sqrt{4L_iL_i+1}-1]/2$ (which has eigenvalues $l\in\{0,1,...,\Lambda\}$),
 $\Gamma$ is  Euler gamma function, the last equality holds only if $l\in\NN_0$, 
and $[b]$ stands for the integer part of $b$. 
 Therefore  the $L_{HI}$  in the  $\bpi_{\Lambda}$-representation make up also an alternative set of generators of $\A_{\Lambda}$ (in \cite{FioPisJGP18}  ${\sf L}_{4i}$ is denoted by $X_i$).

\item The group $Y_\Lambda\simeq SU(N)$ of $*$-automorphisms  of $\A_\Lambda$ is inner and includes a subgroup $SO(4)$ {\it independent of $\Lambda$}
(acting irreducibly via $\bpi_\Lambda$) and a subgroup $O(3)\subset SO(4)$ corresponding to orthogonal transformations (in particular, rotations) of the coordinates $x_i$, which play the role of isometries of $S^2_\Lambda$.

\end{itemize}

\subsection{Diagonalization of the coordinates $x_i$ on $S^2_\Lambda$}
\label{spherespec}

Again, by $O(3)$-covariance  all $x_i$ have the same spectrum,  so we study the one $\Sigma_{x_3}$
of $x_3\equiv x_0$. Since $\left[x_0,L_0\right]=0$, and $\Sigma_{L_0}$ is known from (\ref{D=3Basis}), we  look for simultaneous eigenvectors of $L_0,x_0$ 
\be \label{diagxL}
L_0\,\bm{\chi}_{\alpha}^m=m\, \bm{\chi}_{\alpha}^m,\qquad
x_0\,\bm{\chi}_{\alpha}^m=\alpha \,\bm{\chi}_{\alpha}^m, \qquad\quad
m=-\Lambda,1\!-\!\Lambda,...,\Lambda
\ee
in the form \ $\bm{\chi}_{\alpha}^m=\sum_{l=|m|}^{\Lambda}{\chi_{\alpha,l}^m\psi_l^m}$. \ The second equation 
\begin{comment}
becomes
\be
\left\{
\begin{split}
\chi_{\alpha,|m|+1}^{m}c_{|m|+1}A_{|m|+1}^{0,m}&=\alpha \chi_{\alpha,|m|}^{m}\\
\chi_{\alpha,|m|}^{m}c_{|m|+1}B_{|m|}^{0,m}+\chi_{\alpha,|m|+2}^{m}c_{|m|+2}A_{|m|+2}^{0,m}&=\alpha \chi_{\alpha,|m|+1}^{m}\\
\chi_{\alpha,|m|+1}^{m}c_{|m|+2}B_{|m|+1}^{0,m}+\chi_{\alpha,|m|+3}^{m}c_{|m|+3}A_{|m|+3}^{0,m}&=\alpha \chi_{\alpha,|m|+2}^{m}\\
\quad\vdots\quad\quad\vdots\quad\quad\vdots\quad\quad\vdots\quad\quad\vdots\quad\quad\vdots\quad\quad&\vdots\quad\quad\vdots\\
\chi_{\alpha,\Lambda-2}^{m}c_{\Lambda-1}B_{\Lambda-2}^{0,m}+\chi_{\alpha,\Lambda}^{m}c_{\Lambda}A_{\Lambda}^{0,m}&=\alpha \chi_{\alpha,\Lambda-1}^{m}\\
c_{\Lambda}B_{\Lambda-1}^{0,m}\chi_{\alpha,\Lambda-1}^{m}&=\alpha \chi_{\alpha,\Lambda}^{m}
\end{split}
\right.
\label{autovalvec}
\ee
which in turn 
\end{comment}
can be rewritten in the matrix form $X_m(\Lambda)\chi=\alpha \chi$,
where $\chi=\left(\chi_{\alpha,|m|}^{m},\chi_{\alpha,|m|+1}^{m},\ldots,\chi_{\alpha,\Lambda}^{m} \right)^T$ and $X_m(\Lambda)$  is the following $N(\Lambda;m)\times N(\Lambda;m)$ \ [with \ $N(\Lambda;m):=\Lambda\!-\!|m|\!+\!1$ ] \ real, symmetric, tridiagonal matrix
$$
\hspace{-0.25cm}
X_m(\Lambda)=\left(\!\!
\begin{array}{cccccccc}
0&\!c_{|m|+1}A_{|m|+1}^{0,m}\!&0&0&0&0&0&0\\
c_{|m|+1}A_{|m|+1}^{0,m}\!&0&\!c_{|m|+2}A_{|m|+2}^{0,m}\!&0&0&0&0&0\\
0&\! c_{|m|+2}A_{|m|+2}^{0,m}\!&0& \!c_{|m|+3}A_{|m|+3}^{0,m}\!&0&0&0&0\\
\vdots&\vdots&\vdots&\vdots&\ddots&\vdots&\vdots&\vdots\\
\vdots&\vdots&\vdots&\vdots&\ddots&\vdots&\vdots&\vdots\\
0&0&0&0&0& \!c_{\Lambda-1}A_{\Lambda-1}^{0,m}\! &0&\! c_{\Lambda}A_{\Lambda}^{0,m}\\
0&0&0&0&0&0&\! c_{\Lambda}A_{\Lambda}^{0,m}\! &0\\ 
\end{array}
\!\!\right).
$$
Since $X_m\left(\Lambda\right)\equiv X_{-m}\left(\Lambda\right)$, we can stick to  $m\in\{0,1,...,\Lambda\}$. We shall arrange the spectrum of $X_m$
 $\Sigma_{X_m}=\left\{\alpha_h\left(\Lambda;m\right) \right\}_{h=1}^{N(\Lambda;m)}$
 in descending order. Improving the results of Theorem 4.1  in \cite{FioPis19JPA} we prove

\begin{teorema}   For all $\Lambda\in\NN$ 
\begin{enumerate}

\item If $\alpha$ belongs to  $\Sigma_{X_m}$,  then also $-\alpha$ does.

\item $\alpha_1\left(\Lambda;0\right)>\alpha_1\left(\Lambda;1\right)>\cdots>\alpha_1\left(\Lambda;\Lambda\right)$
\label{disalpha}

\item $\Sigma_{X_m(\Lambda)},\Sigma_{X_m(\Lambda+1)}$ interlace, i.e. between any two consecutive
eigenvalues of $X_m(\Lambda\!+\!1)$ there is exactly one of $X_m(\Lambda)$  (see fig. \ref{InterLace}):
\be 
\alpha_1\left(\Lambda\!+\!1;m\right)\: >\:\alpha_1\left(\Lambda;m\right)\: >\:\alpha_2\left(\Lambda\!+\!1;m\right)
\: >\:\alpha_2\left(\Lambda;m\right)\: >...\: >\:\alpha_{\Lambda}\left(\Lambda;m\right)\: >\:\alpha_{\Lambda+1}\left(\Lambda\!+\!1;m\right);
\nonumber
\ee

\item $\Sigma_{X_0}$ becomes uniformly dense in $[-1,1]$ as $\Lambda\to \infty$, with $\alpha_1\left(\Lambda;0\right)\geq \displaystyle 1\!-\!\frac{\pi^2}{2(\Lambda+2)^2}$ if $\Lambda\geq 2$.

\end{enumerate}
\label{DiagD3}
\end{teorema}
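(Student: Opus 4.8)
\noindent The plan is to treat the four items separately, disposing of 1 and 4 essentially as in Theorem 4.1 of \cite{FioPis19JPA} and concentrating on the two genuinely new points, the monotonicity 2 and the interlacing 3. I would first record the common structural fact: each $X_m(\Lambda)$ is a real symmetric tridiagonal matrix with zero diagonal whose off-diagonal entries $c_\ell A_\ell^{0,m}$ ($|m|<\ell\le\Lambda$) are all strictly positive, because $A_\ell^{0,m}=\sqrt{(\ell^2\!-\!m^2)/(4\ell^2\!-\!1)}>0$ there and $c_\ell\ge1$; hence each $X_m(\Lambda)$ is \emph{unreduced}. Item 1 then follows by conjugating with $\mathrm{diag}(1,-1,1,-1,\dots)$, which flips every off-diagonal sign and fixes the zero diagonal, sending $X_m(\Lambda)\mapsto-X_m(\Lambda)$ (this is the matrix version of the parity $x_0\mapsto-x_0$). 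For item 4 ($m\!=\!0$) I would use $A_\ell^{0,0}=\ell/\sqrt{4\ell^2\!-\!1}\ge\tfrac12$, so that every off-diagonal entry of $X_0(\Lambda)$ is $\ge\tfrac12$, and compare the Rayleigh quotient of $X_0(\Lambda)$ with that of the $(\Lambda\!+\!1)$-dimensional constant-off-diagonal ($=\tfrac12$) Toeplitz matrix $T$ on its strictly positive top eigenvector $w_j=\sin\frac{j\pi}{\Lambda+2}$, getting $\alpha_1(\Lambda;0)\ge\cos\frac{\pi}{\Lambda+2}\ge1-\frac{\pi^2}{2(\Lambda+2)^2}$; symmetry gives the lower edge, and the $O(1/\Lambda)$ interior gaps of a truncated Jacobi matrix — of which $X_0(\Lambda)$ is an $O(1/\Lambda^2)$ perturbation — give uniform density.

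\noindent For item 2, fix $0\le m\le\Lambda\!-\!1$ (WLOG $m\ge0$, since $X_m=X_{-m}$). The key observation I would exploit is that $A_\ell^{0,m}$ is \emph{strictly decreasing in} $|m|$ at fixed $\ell$ (since $\ell^2\!-\!m^2$ is), so $c_\ell A_\ell^{0,m}>c_\ell A_\ell^{0,m+1}$ for $\ell\ge m\!+\!2$, whereas $X_m(\Lambda)$ carries one extra coupling $c_{m+1}A_{m+1}^{0,m}>0$ (the link to $\psi_m^m$) that $X_{m+1}(\Lambda)$ lacks. I would take the top eigenvector $w=\sum_{l=m+1}^\Lambda w_l\psi_l^{m+1}$ of $X_{m+1}(\Lambda)$, which by Perron--Frobenius (apply it to $X_{m+1}(\Lambda)+cI$, a nonnegative irreducible matrix for $c$ large) has all $w_l>0$, embed it as $v_t=t\,\psi_m^m+\sum_{l=m+1}^\Lambda w_l\psi_l^m$ with $t>0$, and compute
\[
\langle v_t,X_m(\Lambda)v_t\rangle=2t\,w_{m+1}c_{m+1}A_{m+1}^{0,m}+2\!\!\sum_{l=m+1}^{\Lambda-1}\!\!w_lw_{l+1}\,c_{l+1}A_{l+1}^{0,m}\ >\ \alpha_1(\Lambda;m\!+\!1)\,\langle w,w\rangle,
\]
the second sum being $\ge\langle w,X_{m+1}(\Lambda)w\rangle$ and the first term strictly positive. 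Since $\langle v_t,v_t\rangle=t^2+\langle w,w\rangle$ and the gain is linear in $t$ while the loss in norm is quadratic, for $t$ small enough the Rayleigh quotient of $X_m(\Lambda)$ on $v_t$ exceeds $\alpha_1(\Lambda;m\!+\!1)$, whence $\alpha_1(\Lambda;m)>\alpha_1(\Lambda;m\!+\!1)$; chaining over $m=0,\dots,\Lambda\!-\!1$ (with $\alpha_1(\Lambda;\Lambda)=0$, a $1\times1$ zero block) proves item 2.

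\noindent For item 3 I would first observe that for the $k\to\infty$ matrices $X_m^0(\Lambda)$ (entries $A_\ell^{0,m}$, with all $c_\ell$ replaced by $1$, hence independent of $\Lambda$ and $k$) $X_m^0(\Lambda)$ is \emph{exactly} the leading $N(\Lambda;m)\times N(\Lambda;m)$ principal submatrix of $X_m^0(\Lambda\!+\!1)$, so Cauchy's interlacing theorem applies and is strict because both are unreduced tridiagonal — this is the content of Proposition \ref{interlace} (from which item 1 also follows, after $X_m\mapsto-X_m$). \textbf{The hard part} is upgrading this to the physical matrices: $X_m(\Lambda)$ and $X_m(\Lambda\!+\!1)$ carry $k(\Lambda)\ne k(\Lambda\!+\!1)$, so their overlapping off-diagonals differ by $O(1/\Lambda^2)$, of the same order as the smallest (spectral-edge) eigenvalue gaps of $X_m$, so plain perturbation theory will not suffice. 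Instead I would apply Proposition \ref{interlace} directly to the explicit tridiagonal matrices, tracking the signs of the Sturm polynomials $p_N(\alpha)=\alpha\,p_{N-1}(\alpha)-\big(c_NA_N^{0,m}\big)^2p_{N-2}(\alpha)$ at the eigenvalues of $X_m(\Lambda\!+\!1)$ and exploiting that the single new off-diagonal $c_{\Lambda+1}A_{\Lambda+1}^{0,m}$ outweighs the $O(1/\Lambda^2)$ shifts of the others. Finally, item 3 immediately forces all eigenvalues of $X_m(\Lambda)$ to be simple.
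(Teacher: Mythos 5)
Your items 1, 2 and 4 are essentially fine, and are in fact more self-contained than the paper's treatment, which simply cites Theorem 4.1 of \cite{FioPis19JPA} for them: the sign conjugation by $\mathrm{diag}(1,-1,1,\dots)$ for item 1, the Perron--Frobenius/Rayleigh-quotient comparison of $X_m(\Lambda)$ with $X_{m+1}(\Lambda)$ for item 2 (using that $A_l^{0,m}$ strictly decreases in $|m|$), and the comparison of $X_0(\Lambda)$ with the constant-$\tfrac12$ Toeplitz matrix for the lower bound in item 4 are all correct. Only your side remark that $X_0(\Lambda)$ is an $O(1/\Lambda^2)$ perturbation of the Toeplitz matrix is inaccurate: $c_\ell A_\ell^{0,0}-\tfrac12=O(1/\ell^2)$, which is not small at small $\ell$, so the uniform-density statement really does rest on the finer estimates of \cite{FioPis19JPA}, exactly as the paper's proof indicates.

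The genuine gap is item 3, which is precisely the improvement over \cite{FioPis19JPA} that the theorem adds, and it is the one item you do not actually prove. The paper disposes of it by applying Proposition \ref{interlace} directly, i.e.\ by reading $X_m(\Lambda)$ as the leading principal block of $X_m(\Lambda\!+\!1)$ (same off-diagonal entries on the overlap), in which case strict interlacing is immediate from the recurrence $p_{n+1}(\alpha)=\alpha p_n(\alpha)-|a_n|^2p_{n-1}(\alpha)$ and nothing quantitative is needed. You reject this reading because $k(\Lambda)\neq k(\Lambda\!+\!1)$ --- a fair observation --- but your substitute, ``track the Sturm-sequence signs and note that the new off-diagonal outweighs the $O(1/\Lambda^2)$ shifts of the others,'' is a plan rather than an argument, and as stated it cannot close: the margins that actually occur near the spectral edge, e.g.\ $\alpha_1(\Lambda\!+\!1;m)-\alpha_1(\Lambda;m)$, are only $O(1/\Lambda^3)$ (for the Toeplitz model $\cos\frac{\pi}{N+2}-\cos\frac{\pi}{N+1}\sim\pi^2/N^3$), i.e.\ smaller than the $O(1/\Lambda^2)$ entry shifts produced by changing $k$, so no crude ``outweighs'' comparison can survive. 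Indeed, a first-order estimate for an admissible but unfavourable choice (say $k(\Lambda)$ minimal and $k(\Lambda\!+\!1)$ enormous) suggests the top eigenvalue can decrease by $\sim 0.1/\Lambda^2$ while enlarging the matrix gains only $\sim\pi^2/\Lambda^3$, so the strict inequalities are genuinely threatened unless the $c_l$'s are kept fixed on the overlap or $k(\Lambda)$ is tied down. Hence you must either adopt the nested reading the paper uses (then Proposition \ref{interlace} applies verbatim and item 3, and with it the simplicity of all eigenvalues, is immediate) or supply quantitative spectral estimates for a specific monotone choice of $k(\cdot)$; at present item 3, the heart of the theorem, is missing from your proposal.
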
 

\begin{proof} 
Items 1., 2., 4. were proved in  Theorem 4.1  in \cite{FioPis19JPA}. In particular  4.
is based on the fact that most the highest $\alpha_h\left(\Lambda;m\right) $ are well 
approximated by the eigenvalues %$\widetilde{\alpha}_h\left(\Lambda;m\right)$ 
of the Toeplitz matrix $X_m^0$  that is obtained from $X_m$ 
replacing all nonzero elements by $1/2$, although in this case  \ $c_lA_l^{0,m}\nrightarrow1/2$,
\ and \ $c_l A_l^{0,m}\simeq1/2$ \ holds only if \ $|m|\ll l$; the eigenvalues of $X_m^0$ are given by  (\ref{valuecos})  with \ $N=N(\Lambda;m)$. \ \  
Item  2. is a direct consequence of Proposition \ref{interlace}. Also item  1., after  the inversion
$X_m\mapsto-X_m$.
\end{proof}

\noindent
Item 2. implies in particular that all  eigenvalues are simple.

\noindent
As $\Lambda\to\infty$ the eigenvectors of $x_3$ become generalized eigenvectors,
as expected; in particular, the one with the highest eigenvalue $\alpha_1\!\left(\Lambda;0\right)$
becomes a Dirac delta concentrated in the North pole.

\subsection{$O(3)$-invariant  UR and strong SCS  on $S^2_\Lambda$}
\label{sphereCS1}

{\bf Theorem 4.1 in \cite{FioPis19LMP}.} \ \  {\it The uncertainty relation 
\bea
(\Delta\bL)^2\ge |\langle \bL\rangle| \qquad\Leftrightarrow \qquad
\langle \bL^2\rangle
\ge |\langle \bL\rangle|\left(|\langle \bL\rangle|+1\right)  \label{LUR3''}
\eea
holds on $\Hi_\Lambda=\oplus_{l=0}^\Lambda V_l$ and is saturated by
the spin coherent states $\bm{\phi}_{l,g}:=  \bpi_\Lambda (g)\psi^l_l\in V_l$, $l\in \{0,1,..., \Lambda\}$, $g\in SO(3)$. Moreover on $\Hi_\Lambda$
the following resolution of identity holds:
\be
I=\sum_{l=0}^\Lambda C_l\int_{SO(3)} d\mu(g) P_{l,g},\qquad\qquad C_l=\frac{2l\!+\!1}{8\pi^2},\qquad  P_{l,g}=\bm{\phi}_{l,g}
\langle\bm{\phi}_{l,g},\cdot\rangle. %,\qquad \bm{\phi}_{l,g}:= T(g)Y^l_l. 
\label{ResolIdS^2_L}
\ee
}
We can parametrize $g\in SO(3)$,  the invariant measure and the integral over $SO(3)$ through the Euler angles $\varphi,\theta,\psi$:  
\bea
&& g=e^{\varphi I_3}e^{\theta I_2}e^{\psi I_3}\quad\mbox{where }\:
I_3:=\left(\!\!\!\ba{ccc} 0 & 1 & 0\\ -1 & 0 & 0 \\ 0 & 0 & 0\ea\!\!\right)\!,\quad
I_2:=\left(\!\!\ba{ccc} 0 & 0 & -1\\ 0 & 0 & 0 \\ 1 & 0 & 0\ea\!\!\right) 
\quad\Rightarrow\\
&& \bpi_\Lambda(g)=
e^{i\varphi L_3}e^{i\theta L_2}e^{i\psi L_3}, \quad
\int\limits_{SO(3)}\!\!\!\! d\mu(g)=\!\!\int\limits^{2\pi}_0\!\!d\varphi\!\!\int\limits^{\pi}_0\!\!d\theta\sin\theta\!\!\int\limits^{2\pi}_0\!\!d\psi=8\pi^2.   %  \label{eulerangles}
\eea
In  (\ref{ResolIdS^2_L}) integration
over $\psi$ can be actually eliminated rescaling $d\mu$ by $2\pi$, i.e. one can integrate just over $S^2$, because the $\psi^l_l$  are eigenvectors of $L_3$.
The theorem holds also for $\Lambda=\infty$, i.e. on $ {\cal L}^2(S^2)$,
because on the latter the commutation relations  $[L_i,L_j]=i\varepsilon^{ijk}  L_k $  are the 
same:  the UR  (\ref{LUR3''}) is saturated by
the spin coherent states $\bm{\phi}_{l,g}:=  \bpi_\Lambda (g)Y^l_l\in V_l$,   and 
%the resolution of the identity 
(\ref{ResolIdS^2_L}) holds provided $l$ run  over $\NN_0$ and we replace $\psi^l_l$ by $Y^l_l$, $\bpi_\Lambda$ by the (reducible) representation of $SO(3)$ on  $ {\cal L}^2(S^2)$ \cite{FioPis19LMP}. 

Again, $\Delta x_1,\Delta x_2,\Delta x_3$  may vanish separately,  not simultaneously, because
\bea
(\Delta\bx)^2\ge(\Delta\bx)^2_{min}\sim {\color{red}{ \frac1{\Lambda^2} }}
\eea

 Fixed a generic normalized vector  
\ $\bm{\omega}\equiv\sum\limits_{l=0}^{\Lambda}\sum\limits_{h=-l}^{l}
\!\omega_l^h\psi_l^h$, for $g\in SO(3)$ let 
\be
\bm{\omega}_g:=\bpi_\Lambda(g)\bm{\omega}%=\sum_{m=-\Lambda}^{\Lambda}e^{i\alpha m}\omega_m\psi_m
,\qquad\qquad P_g:=\bm{\omega}_g\langle\bm{\omega}_g,\cdot\rangle.
\ee
As the unitary representation $\bpi_\Lambda$ of $SO(3)$ on $\Hi_\Lambda$ is {\it reducible}, more precisely  the direct sum of the irreducible representations $(V_l,\pi_l)$, $l=0,...,\Lambda$, completeness  and resolution of the identity for the system  ${\cal S}^\omega\equiv\left\{\bm{\omega}_g\right\}_{\!g\in SO(3)}$ are not automatic.  ${\cal S}^\omega$ is complete
if for all $l$ there exists at least one $h$ such that $\omega_l^h\neq 0$ (then it is also overcomplete).  Moreover, we have proved

{\bf Theorem 4.2 in  \cite{FioPis19LMP}.} \ \ {\it 
${\cal S}^\omega\equiv\left\{\bm{\omega}_g\right\}_{\!g\in SO(3)}$ is a strong SCS if \ $\sum\limits_{h=-l}^{l}|\omega_l^h|^2  =\frac{2l\!+\!1}{(\Lambda+1)^2}$ 
\ $\forall l$; it is also  fully $O(3)$-covariant if \ $\omega_l^h=\omega_l^{-h}$. 
The following resolution of the identity on $\Hi_\Lambda$ holds:
\bea
\id=\frac{(\Lambda+1)^2}{8\pi^2}\int_{SO(3)}\!\! d\mu(g) \,P_g,\qquad
P_g:=\bomega_g \langle\bomega_g,\cdot\rangle.
%,\qquad \bm{\omega}_g:=\bpi_\Lambda(g)\bm{\omega}
\label{ResolIdS^2_Lomegagen}
\eea
}

We can make the isotropy subgroup $H\subset SO(3)$ nontrivial 
choosing e.g. $\bm{\omega}$ an eigenvector of $L_3$; correspondingly $H=\{e^{i \psi L_3}\, |\, \psi\in\RR/2\pi\ZZ\}\simeq SO(2)$. In particular  
$\bphi^\beta=\sum_{l=0}^{\Lambda}\psi_l^0 e^{i\beta_l}
\frac{\sqrt{2l\!+\!1}}{(\Lambda\!+\!1)}$ (with $\beta\in(\RR/2\pi\ZZ)^{\Lambda\!+\!1}$)
has zero eigenvalue.  Setting \ $\bphi_g^\beta=\bpi_\Lambda(g)\bphi^\beta$,
we find that different rays are parametrized by 
$g=e^{\varphi I_3}e^{i\theta I_2}\in SO(3)/SO(2)$.
Hence (\ref{ResolIdS^2_Lomegagen}) holds also with the (normalized ) integration over just the coset space \ $ SO(3)/SO(2)\simeq S^2$.
Based on eqs. (58-59) of  \cite{FioPis19LMP} we thus find 
%the following Corollary,

\begin{corollario} 
${\cal S}^\beta=\{\bphi_g^\beta\}_{g\in S^2}$ is a family  of fully $O(3)$-covariant, strong SCSs, and 
\bea
\id=\frac{(\Lambda+1)^2}{4\pi}\!\!\int^{2\pi}_0\!\!\!\!\!\!d\varphi \!\!\int^{\pi}_0\!\!\!\!d\theta\,\sin\theta \: P_g^\beta,\qquad
P_g^\beta=\bphi_g^\beta \langle\bphi_g^\beta,\cdot\rangle\label{ResolIdS^2_Lphi}
\eea
for all $\beta\in(\RR/2\pi\ZZ)^{\Lambda\!+\!1}$. On it 
$(\Delta \bL)^2$ is independent of $\beta$, while
$(\Delta \bx)^2$ is smallest  on the $\bphi_g^0$, with
\bea
(\Delta \bL)^2=\frac{\Lambda(\Lambda\!+\!2)}{2}, \qquad\qquad
 \left.(\Delta \bx)^2\right|_{\bphi_g^0}<\frac{1}{\Lambda+1}.\label{LXURphi}
\eea
\end{corollario}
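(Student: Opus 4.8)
The plan is to deduce the Corollary from Theorem 4.2 of \cite{FioPis19LMP} together with the explicit angular-momentum content of $\bphi^\beta=\sum_{l=0}^{\Lambda}\psi_l^0\,e^{i\beta_l}\frac{\sqrt{2l+1}}{\Lambda+1}$. First I would check that $\bphi^\beta$ meets the hypothesis of that theorem: its only nonzero expansion coefficients are the $h=0$ ones, so $\sum_{h=-l}^{l}|\omega_l^h|^2=|\omega_l^0|^2=\frac{2l+1}{(\Lambda+1)^2}$ for every $l$ (in particular $\|\bphi^\beta\|^2=\frac{1}{(\Lambda+1)^2}\sum_{l=0}^{\Lambda}(2l+1)=1$), and $\omega_l^h=\omega_l^{-h}$ holds trivially; hence $\{\bpi_\Lambda(g)\bphi^\beta\}_{g\in SO(3)}$ is a fully $O(3)$-covariant strong SCS with resolution of the identity (\ref{ResolIdS^2_Lomegagen}).

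Next I would reduce the $SO(3)$-integral to $S^2$. Since $L_3\bphi^\beta=0$, the ray of $\bphi^\beta$ is stabilized by $H=\{e^{i\psi L_3}\}\simeq SO(2)$, so $P_g^\beta=\bphi_g^\beta\langle\bphi_g^\beta,\cdot\rangle$ depends only on $gH\in SO(3)/SO(2)\simeq S^2$; writing $g=e^{\varphi I_3}e^{\theta I_2}e^{\psi I_3}$ the integrand in (\ref{ResolIdS^2_Lomegagen}) is $\psi$-independent, and performing the trivial $\psi$-integration produces a factor $2\pi$ that converts $\frac{(\Lambda+1)^2}{8\pi^2}$ into $\frac{(\Lambda+1)^2}{4\pi}$ and leaves $\int_0^{2\pi}d\varphi\int_0^{\pi}d\theta\sin\theta$, i.e. (\ref{ResolIdS^2_Lphi}).

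For the uncertainties I would use that $\bpi_\Lambda(g)$ is unitary while $\bL^2,\bx^2$ are $O(3)$-invariant, so $\langle\bL^2\rangle,\langle\bx^2\rangle,\langle\bx\rangle^2$ take the same value on $\bphi_g^\beta$ as on $\bphi^\beta$, and $\langle\bL\rangle$ transforms as a vector. On $\bphi^\beta$ one has $\langle\bL\rangle=\langle x_1\rangle=\langle x_2\rangle=0$ (because $L_3\psi_l^0=0$, while $L_\pm,x_\pm$ send $\psi_l^0$ to $m=\pm1$ states orthogonal to $\mbox{span}\{\psi_{l'}^0\}$), hence the same holds on all of ${\cal S}^\beta$. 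Then $\langle\bL^2\rangle=\sum_{l=0}^{\Lambda}l(l+1)\frac{2l+1}{(\Lambda+1)^2}=\frac{\Lambda(\Lambda+2)}{2}$ via $\sum_{l=0}^{\Lambda}l(l+1)(2l+1)=\frac{\Lambda(\Lambda+1)^2(\Lambda+2)}{2}$, which is $\beta$-independent and gives $(\Delta\bL)^2=\langle\bL^2\rangle=\frac{\Lambda(\Lambda+2)}{2}$.

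Finally, for $(\Delta\bx)^2$ I would plug $\langle\bL^2\rangle$ and $\langle\widetilde P_\Lambda\rangle=\frac{2\Lambda+1}{(\Lambda+1)^2}$ into (\ref{xx}) to get the $\beta$-independent value $\langle\bx^2\rangle=1-\frac{1}{\Lambda+1}+\frac{\Lambda^2}{2k}$, and compute $\langle x_0\rangle$ on $\bphi^\beta$ from (\ref{azionex})--(\ref{Clebsch}): the matrix element $\langle\psi_{l+1}^0|x_0|\psi_l^0\rangle=c_{l+1}(l+1)/\sqrt{(2l+1)(2l+3)}$ combines with the coefficients of $\bphi^\beta$ so that the Clebsch--Gordan radicals cancel, leaving $\langle x_0\rangle_{\bphi^\beta}=\frac{2}{(\Lambda+1)^2}\sum_{j=1}^{\Lambda}j\,c_j\cos(\beta_{j-1}-\beta_j)$. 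Hence $\langle\bx\rangle^2=\langle x_0\rangle^2$ is maximal, and $(\Delta\bx)^2$ minimal, when all the $\beta_l$ coincide, i.e. on the $\bphi_g^0$; there $c_j\ge1$ gives $\langle x_0\rangle\ge\frac{2}{(\Lambda+1)^2}\sum_{j=1}^{\Lambda}j=\frac{\Lambda}{\Lambda+1}$, whence $(\Delta\bx)^2\le1-\frac{1}{\Lambda+1}+\frac{\Lambda^2}{2k}-\frac{\Lambda^2}{(\Lambda+1)^2}=\frac{\Lambda}{(\Lambda+1)^2}+\frac{\Lambda^2}{2k}$, and $k\ge\Lambda^2(\Lambda+1)^2$ bounds this by $\frac{\Lambda+1/2}{(\Lambda+1)^2}<\frac{1}{\Lambda+1}$. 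I expect this last paragraph to be the main obstacle: one must produce the exact closed forms of $\langle x_0\rangle$ and $\langle\bx^2\rangle$ --- in particular spot the cancellation of the Clebsch--Gordan radicals that collapses the sum to $\sum_j j\,c_j$ --- and then control the $c_l=\sqrt{1+l^2/k}$ corrections using $k\ge\Lambda^2(\Lambda+1)^2$; this is essentially the computation behind eqs. (58)--(59) of \cite{FioPis19LMP}.
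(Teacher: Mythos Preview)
Your proposal is correct and follows essentially the same route as the paper: apply Theorem~4.2 of \cite{FioPis19LMP} to $\bphi^\beta$, use the $SO(2)$ isotropy of the $L_3=0$ eigenvector to collapse the $SO(3)$ integral to $S^2$, and then compute $(\Delta\bL)^2$ and $(\Delta\bx)^2$ directly. The paper itself merely says ``Based on eqs.\ (58--59) of \cite{FioPis19LMP}'' at this point; your last two paragraphs are precisely a reconstruction of those formulae, and the cancellation of the Clebsch--Gordan radicals that turns $\langle x_0\rangle$ into $\frac{2}{(\Lambda+1)^2}\sum_j j\,c_j\cos(\beta_{j-1}-\beta_j)$, together with the bound $\frac{\Lambda^2}{2k}\le\frac{1}{2(\Lambda+1)^2}$, is exactly what is needed.
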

Within the class  of strong SCS, the
$\bphi_g^0$ are closest to classical states(=points)  of $S^2$, and in one-to-one
correspondence with them: \
$S^2\leftrightarrow {\cal S}^2\equiv\{\bphi_g^0\}_{g\in SO(3)/SO(2)\simeq S^2}$.

\subsection{$O(3)$-invariant weak SCS on $S^2_\Lambda$ minimizing $(\Delta \bm{x})^2$}
\label{CS3}

Since $(\Delta \bm{x})^2$ is $O(3)$-invariant, so is the set ${\cal W}^2$ of states   minimizing it; ${\cal W}^2$ is a weak SCS. We can recover the whole set from
any element  $\underline{\bm{\chi}}$ through rotations, \
${\cal W}^2=\left\{\underline{\bm{\chi}}_g\!\equiv 
\bpi_\Lambda(g)\underline{\bm{\chi}} \right\}_{g\in SO(3)}$.
Choosing $\underline{\bm{\chi}}$ so that  $\langle \bx\rangle=|\langle \bx\rangle|\Be_3$ [whence  $\langle x_3\rangle=|\langle \bx\rangle|$,
% $\langle x_\pm\rangle=0$, 
$(\Delta \bm{x})^2=\langle \bx^2\rangle-\langle x_3\rangle^2$], 
we find
 $\la \bx\ra_{\underline{\bm{\chi}}_g}= \left|\la \bx\ra_{\underline{\bm{\chi}}}\right|\bu_g$,
 where $\bu_g=g\bu$. 
We have shown that \ $L_3\underline{\bm{\chi}}=0$.
 This implies that the  isotropy subgroup is $H=\{e^{i \psi L_3}\, |\, \psi\in\RR/2\pi\}\simeq SO(2)$   whence 
${\cal W}^2=\{\underline{\bm{\chi}}_g\equiv\bpi_\Lambda(g)\underline{\bm{\chi}}\}_{g=e^{\varphi I_3}e^{i\theta I_2}\in SO(3)/SO(2)\simeq S^2}$, \ 
$\bu_g=(\sin\theta\cos\varphi,\sin\theta\sin\varphi,\cos\theta)$.
The (rays associated to) 
$\underline{\bm{\chi}}_g$ are closest to classical states(=points) of $S^2$, and 
are in one-to-one correspodence with them: \
$S^2\leftrightarrow {\cal S}^2\equiv\{\bphi_g\}_{g\in S^2}$. 
At order $O(1/\Lambda^2)$ $\underline{\bm{\chi}}$ coincides with the
eigenvector $\widehat{\bm{\chi}}$ of $x_3$ with highest eigenvalue
($L_3\widehat{\bm{\chi}}=0$).
We have shown that
\bea
0<\left(\Delta{\bx}\right)^2_{min}=\left(\Delta{\bx}\right)^2_{\underline{\bm{\chi}}_g}< \frac{11}{(\Lambda+1)^2}.            \label{Deltax2qminS^1_L}
\eea

\section{Outlook, comparison with the literature and final remarks}
\label{Conclu}

Imposing an energy cutoff  $\overline{E}$  may: i) yield a simpler low-energy
approximation  $\overline{\T}$  of a well-defined quantum theory $\T$; 
ii) make sense of $\T$ if
$\overline{\T}$ is well-defined while $\T$ is not (as in the case of UV-divergent QFT);
iii) help
in  figuring out from $\overline{\T}$  a new theory valid also at energies $E>\overline{E}$,
if $\overline{E}$ represents a threshold for new physics not accounted for by $\T$.

Denoting by $\Hi$  the Hilbert space of $\T$,
the cutoff is imposed projecting $\T$ on the Hilbert subspace $\overline{\Hi}$ characterized by energies $E$ below $\overline{E}$. The projected  observables fulfill modified algebraic relations; in particular, space coordinates in general become noncommutative. Thus low energy effective theories with space(time) noncommutativity and
lower bounds for space(time) localization (as expected by any candidate theory of quantum gravity) may all naturally arise from the imposition of 
an energy cutoff. Mathematically, $\overline{E}$ can play the role of deformation 
parameter. If $\overline{\Hi}$ remains finite-dimensional for all (finite) $\overline{E}$, the latter may be replaced  by a discrete parameter like $n=\dim(\overline{\Hi})$, and 
$\T_n\equiv\overline{\T}(n)$ make up a fuzzy approximation of $\T$. 
%The latter may be applied also (or only) to the Hilbert space of internal (gauge) degrees of freedom.
If $\T$ lives on a manifold $M$, and in the Hamiltonian we include a suitable confining potential   $U_n$  with a minimum on a submanifold $N$ of $M$ that becomes sharper and sharper as $n\to \infty$,  we  effectively induce a dimensional reduction to a noncommutative quantum theory on $N$.

In the present paper, after elaborating  the arguments sketched in the previous two paragraphs,
we  have reviewed our application of the latter mechanism for the
construction of  a {\bf $d$-dimensional, $O(D)$-covariant fuzzy sphere} ($d=1,2$),  
i.e. a sequence $\{S^d_\Lambda\}_{\Lambda\in\NN}\equiv \{(\Hi_\Lambda,\A_\Lambda)\}_{\Lambda\in\NN}$
of finite-dimensional, $O(D)$-covariant  ($D=d\!+\!1$)  approximations of quantum mechanics (QM) of a spinless
particle on the sphere  $S^d$; \ \   $\bx^2\gtrsim 1$, and $\bx^2$ essentially collapses to 1 as $\Lambda\to\infty$ \ (see the Introduction).
This result has been achieved imposing  an energy-cutoff 
$\overline{E}= \Lambda(\Lambda\!+\!d\!-\!1)$ on QM of  a spinless particle in $\RR^{D}$  subject to a   confining potential $V(r;\Lambda)$ that has a minimum on the  sphere $r=1$ and becomes sharper and sharper as $\Lambda\to\infty$. 
$\A_\Lambda$ is a fuzzy approximation of the {\it whole algebra of observables} of  the  particle on $S^d$
(phase space algebra), and converges to the latter in the limit
$\Lambda\to \infty$. At least for $D=2,3$, there is an $O(D)$-covariant $*$-isomorphism 
$\A_\Lambda\simeq \pi_\Lambda[Uso(D\!+\!1)]$, 
where $\pi_\Lambda$ is a suitable irreducible %unitary 
representation   of $U\!so(D\!+\!1)$ on $\Hi_{\Lambda}$. The latter is 
a {\it reducible} representation of the subgroup $O(D)$ (and of the $U\!so(D)\subset U\!so(D\!+\!1)$ subalgebra generated by the $L_{ij}$), \ more precisely the direct sum of {\it all} the irreducible representations fulfilling  $L^2\le \Lambda(\Lambda\!+\!d\!-\!1)$.
A similar decomposition holds for the subspace ${\cal C}_\Lambda\subset\A_\Lambda$ 
%spanned by ordered monomials 
of completely symmetrized polynomials in the $x_i$ acting as multiplication operators on $\Hi_\Lambda$. For instance, in the case $d=2$ we find
\bea
 \Hi_\Lambda\simeq
\bigoplus\limits_{l=0}^{\Lambda} V_l, \qquad
%\stackrel{\Lambda\to\infty}{\longrightarrow} ,\\
{\cal C}_\Lambda\simeq\bigoplus\limits_{l=0}^{2\Lambda} V_l.       
  \label{directsum}
\eea
where $(V_l,\pi_l)$ are the irreducible representations of $O(3)$ characterized
by   $\bL^2= l(\!+\!1)$. As $\Lambda\to\infty$ 
these respectively become the decompositions  of  ${\cal L}^2(S^2)$ and of 
 $C(S^2)$ that acts on ${\cal L}^2(S^2)$.

Localization in configuration and angular momentum space can be measured through
the $O(D)$-invariant square uncertainties $(\Delta \bx)^2$ (see section \ref{Localization}) 
and $(\Delta \bL)^2$;  for $d=1,2$ we have determined lower bounds and UR characterizing them. In view of future applications of the models, it is crucial to determine systems of coherent states (SCS) on these $S^d_\Lambda$. Section \ref{SCS} is a coincise introduction to SCS. In sections \ref{circlespec}, \ref{spherespec} 
we have studied the eigenvalue equation of a coordinate  $x_i$ (slightly improving
the results of \cite{FioPis19JPA}) and its relation with the minimization of $(\Delta \bx)^2$ for $d=1,2$;  the states minimizing $(\Delta \bx)^2$
make up a $O(D)$-invariant weak SCS ${\cal W}^d$  (sections \ref{circleCS2},  \ref{CS3}). 
 In sections \ref{circleCS1},  \ref{sphereCS1} we have presented the class of
$O(D)$-invariant, strong SCS, in particular the one ${\cal S}^d$  minimizing  $(\Delta \bx)^2$
within the class. 

Let us  compare  $S^2_\Lambda$ with the seminal fuzzy sphere $S^2_n$ of Madore-Hoppe \cite{Mad92,HopdeWNic}.
The $*$-algebra ${\cal A}_n\simeq M_n(\CC)$ of observables on  $S^2_n$ is generated by hermitean coordinates $x_i$ ($i=1,2,3$) fulfilling
\be
[x_i,x_j]=\frac {i}{\sqrt{l(l+1)}}\varepsilon^{ijk}x_k, \qquad
\bx_2:=x_ix_i=1,\qquad l\in\NN/2, \quad n=2l\!+\!1.                      \label{FS}
\ee
In fact $L_i=x_i\sqrt{l(l+1)}$ make up the standard basis of $so(3)$ in the 
irreducible representation $(\pi_l,V_l)$. Hence the spectrum of all  $x_i$ is  
$\Sigma_{x_i}=\left\{  m/\sqrt{l(l\!+\!1)}\: |\:  m=-l,1\!-\!l,...,l\right\}$.
We note that: 

\begin{enumerate}[label=\roman*)]

\item \ Contrary to (\ref{xx}), eq. (\ref{FS}) are not covariant under the whole $O(3)$, 
in particular under parity $x_i\mapsto -x_i$, but only under $SO(3)$. 

\item  \ Contrary to the $\Lambda\to\infty$ limit of (\ref{directsum}),
in the $l\to\infty$ limit  $\Hi=V_l$ remains irreducible and does  not invade ${\cal L}^2(S^2)$. 

\item \ By Theorems \ref{DiagD2}, \ref{DiagD3}, the spectrum of any coordinate $x_i$ on either $S^2_\Lambda$ or $S^2_n$ fulfills the two properties listed in section \ref{diagx}. 
The former  fulfills also  one not shared by the latter:  the eigenstate 
of $x_3$ with maximal eigenvalue, which is very localized around the North pole of $S^2$, is a   $L_3=0$ eigenstate of $L_3$, see fig. \ref{Vett_tg} right.
As $\Lambda\to\infty$ the latter becomes the generalized eigenstate (distribution) $2\delta(\theta)/\sin\theta\simeq\delta(x_1)\delta(x_2)$ on $S^2$ concentrated on the North pole (here $\theta$ is the colatitude); the classical counterpart of this property is that the classical particle on $S^2$ in the position $\bx=(0,0,1)$ has zero $L_3$ ($z$-component of the angular momentum), because
$$
\vspace{-0.1cm}L_3=\left(\underline{\bm{ L}}\right)_3=\left(\underline{\bx} \times \underline{\bm{p}}\right)_3 =0.
$$
On the contrary, on $S^2_n$ this property   is lost;  as the $x_i$ are obtained by rescaling the $L_i$  there is no longer room for  independent observables playing the role of angular momentum operators. 

\item \ On our fuzzy sphere $S^2_\Lambda$ the states with minimal space uncertainty $(\Delta \bx)^2$ make up a weak SCS ${\cal W}^2$, and
$\displaystyle (\Delta \bx)^2_{{\cal W}^2}< \frac{11}{(\Lambda+1)^2}$; the strong SCS
${\cal S}^2$ with minimal  $(\Delta \bx)^2$ has
$\displaystyle (\Delta \bx)^2_{{\cal S}^2}< \frac{1}{\Lambda+1}$.
Both are smaller than the $\displaystyle (\Delta \bx)^2_{min}=\frac{1}{l+1}$  on Madore FS (adopting the same cutoff $l=\Lambda$).

\end{enumerate}

\noindent
Properties i)-iii) in particular show why in our opinion 
$\{{\cal C}_\Lambda\}_{\Lambda\in\NN}$ can be interpreted  as the space of functions on fuzzy  {\it configuration space} $S^2_\Lambda$,  while $\{\A_n\}_{n\in\NN}$ of Madore-Hoppe should be interpreted only as  the space (actually, the algebra)
of functions on a fuzzy {\it spin phase space} $S^2_n$. As for iv), it would be also interesting to compare distances between two maximally localized states on our $S^2_\Lambda$ (either in ${\cal W}^2$  or in ${\cal S}^2$) and  on  the Madore-Hoppe FS  \cite{DanLizMar14}.

\smallskip
Ref. \cite{Pis20} begins to apply in detail our approach to spheres $S^d$  with $d\ge 3$; this allows a first comparison with the rest of the literature. 
The  4-dimensional fuzzy spheres introduced in \cite{GroKliPre96}, as well as the ones
of dimension
$d\ge 3$  considered in \cite{Ramgoolam,DolOCon03,DolOConPre03}, are  based on $End(V)$, 
where $V$ carries a particular {\it irreducible representation} of both $Spin(D)$ and $Spin(D+1)$ (and therefore of both $Uso(D)$ and  $Uso(D+1)$); as $\bx^2$ is central, it can be set $\bx^2=1$ identically. The commutation relations are also $O(D)$-covariant and Snyder-like. 
The fuzzy spherical harmonics are elements, but do do not close a subalgebra, of $End(V)$,
i.e. the product $Y\cdot Y'$ of two spherical harmonics is not a combination of spherical harmonics.
This is exactly as in our models, i.e.  ${\cal C}_\Lambda$ is a subspace, but not a subalgebra, of
$\A_\Lambda$. (One can introduce a product in ${\cal C}_\Lambda $ by projecting 
the result of $Y\cdot Y'$ 
to the vector space ${\cal C}_\Lambda$, but this will be non-associative; associativity is recovered in the $\Lambda\to \infty$ limit).

In \cite{Ste16,Ste17} the authors consider also  the construction of a fuzzy 4-sphere $S^4_N$
through a {\it reducible} representation of $Uso(5)$ on a Hilbert space $V$ obtained decomposing
 an irreducible representation $\pi$ of $Uso(6)$ characterized by a triple of highest weights
$(N,0,n')$; so $End(V)\simeq  \pi[Uso(6)]$, in analogy with our results. 
The elements $X_i$ of a basis of the vector space \ $so(6)\setminus so(5)$ \
play the role of noncommuting cartesian coordinates.  
Hence, the $O(5)$-scalar $\bx^2=X_iX_i$ is no longer central, but its spectrum is still very close to 1  provided $N\gg n'$, because then $V$ decomposes only in few irreducible $SO(5)$-components, all with eigenvalues of $\bx^2$ very close to 1; 
if $n'=0$ then $\bx^2\equiv 1$ ($V$ carries an irreducible representation of $O(5)$), and one recovers the  fuzzy 4-sphere
of   \cite{GroKliPre96}. On the contrary, in our approach 
 $\bx^2\equiv x_ix_i\simeq 1$  is guaranteed  by adopting as noncommutative Cartesian coordinates the  $x_i=f_1(L^2)X_if_2(L^2)$, with suitable functions $f_1,f_2$, rather than the $X_i$. 

\smallskip
Many other aspects and  applications of the general approach 
described in this paper and of these new fuzzy spheres deserve  investigations.
We hope that progresses  can be reported soon.

%\subsubsection*{Acknowledgments}
%We are grateful to F. D'Andrea and T. Weber for useful discussions.

\section{Appendix}

\begin{figure}[t]
\includegraphics[width=15cm]{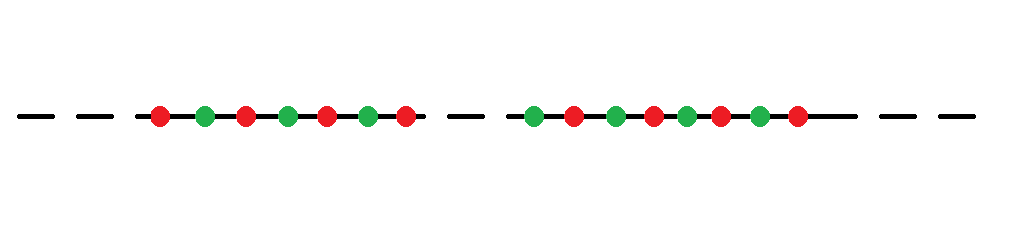}
\caption{The spectra of $A_{n+1}$ (red eigenvalues) and $A_n$ (green eigenvalues) 
interlace.}
\label{InterLace}
\end{figure}

Consider a sequence $\{A_n\}_{n\in\NN}$ of hermitean tridiagonal matrices with zero
diagonal elements %of the form
\be
A_n=\left(\!
\begin{array}{ccccccccc}
0&a_1 &0&0&0&\hdots&0&0&0\\
\overline{a_1} &0& a_2 &0&0&\hdots&0&0&0\\
0&  \overline{a_2} &0&  a_3 &0&\hdots&0&0&0\\
\vdots&\vdots&\vdots&\vdots&\vdots&\ddots&\vdots&\vdots&\vdots\\
0&0&0&0&0&\hdots  &0&  a_{n-2}&0\\
0&0&0&0&0&\hdots&  \overline{a_{n-2}}  &0&  a_{n-1}\\
0&0&0&0&0&\hdots&0&  \overline{a_{n-1}}  &0
\end{array} \!\right).
\ee
For all $n$ the matrix $A_n$ is nested into $A_{n+1}$, more precisely is the upper diagonal
block of the latter.
\begin{comment}
\bea
A_{n+1}=\left(\!
\ba{cc}
A_n                             & \ba{c}0_{n-1}^T\\ a_n \ea \\
0_{n-1} \:\:\:\: \overline{a_n} & 0
\ea \!\right);
\eea
here $0_k$ is the row with $k$ zeroes, $0_k^T$ its transpose column.
\end{comment}
We arrange the (necessarily real) eigenvalues $\alpha^n_h$ of $A_n$ in decreasing order, 
\ $\alpha^n_1\ge\alpha^n_2\ge...\ge\alpha^n_n$. 

\begin{propo} The spectrum $\Sigma_{A_n} =\left\{\alpha^n_h\right\}_{h=1}^{n}$ depends only on the $|a_h|$, $h=1,...,n-1$. For all $n\in\NN$,
$a_h\neq 0$ for all $h=1,...,n$ implies that all   $\Sigma_{A_h},\Sigma_{A_{h+1}}$ interlace, i.e. between any two consecutive
eigenvalues of $A_{h+1}$ there is exactly one of $A_h$ (see fig. \ref{InterLace}).
\label{interlace}
\end{propo}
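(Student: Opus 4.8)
The plan is to prove Proposition \ref{interlace} by exploiting two classical facts about Jacobi (real symmetric tridiagonal) matrices: the eigenvalues depend only on the moduli of the off-diagonal entries, and eigenvalue interlacing holds between a Jacobi matrix and its leading principal submatrix provided no off-diagonal entry vanishes. First I would reduce the general (complex) $A_n$ to a real one. If $a_h = |a_h|\, e^{i\theta_h}$, I conjugate $A_n$ by the diagonal unitary $D = \mathrm{diag}(1, e^{i\theta_1}, e^{i(\theta_1+\theta_2)}, \dots)$ (phases accumulating along the diagonal); then $D^\dagger A_n D$ is the real symmetric tridiagonal matrix with the same diagonal (all zeros) and off-diagonal entries $|a_1|,\dots,|a_{n-1}|$. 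Since conjugation by a unitary preserves the spectrum, $\Sigma_{A_n}$ depends only on the $|a_h|$; this is the first assertion, and it also lets me assume henceforth that all $A_n$ are real with positive off-diagonal entries $b_h := |a_h| > 0$.

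Next I would invoke the standard interlacing (Cauchy) theorem for Hermitian matrices: if $B$ is an $(n{+}1)\times(n{+}1)$ Hermitian matrix and $A$ is obtained by deleting the last row and column (its leading $n\times n$ principal submatrix), then the eigenvalues interlace weakly, $\alpha^{n+1}_1 \ge \alpha^n_1 \ge \alpha^{n+1}_2 \ge \cdots \ge \alpha^n_n \ge \alpha^{n+1}_{n+1}$. Since $A_n$ is by construction the leading principal block of $A_{n+1}$, this applies directly. The remaining point is to upgrade the weak inequalities to strict ones under the hypothesis $b_h \neq 0$. For this I would use the well-known fact that a Jacobi matrix with nonzero off-diagonals has simple spectrum (the eigenvector is determined by its first component through the three-term recurrence, so each eigenspace is one-dimensional), which immediately rules out equalities of the form $\alpha^{n+1}_h = \alpha^{n+1}_{h+1}$ and $\alpha^n_h = \alpha^n_{h+1}$. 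To exclude $\alpha^n_h = \alpha^{n+1}_h$ or $\alpha^n_h = \alpha^{n+1}_{h+1}$, I would argue via the characteristic polynomials: writing $p_n(\lambda) = \det(\lambda I - A_n)$, the three-term recurrence $p_{n+1}(\lambda) = \lambda\, p_n(\lambda) - b_n^2\, p_{n-1}(\lambda)$ shows that a common root $\lambda_0$ of $p_n$ and $p_{n+1}$ would force $b_n^2 p_{n-1}(\lambda_0) = 0$, hence (as $b_n \neq 0$) $p_{n-1}(\lambda_0) = 0$; iterating, $\lambda_0$ would be a root of every $p_j$ down to $p_1(\lambda)=\lambda$ and $p_0 \equiv 1$, contradicting $p_0 \equiv 1$. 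Therefore $p_n$ and $p_{n+1}$ share no root, which together with the weak interlacing and simplicity gives the strict chain $\alpha^{n+1}_1 > \alpha^n_1 > \alpha^{n+1}_2 > \cdots > \alpha^n_n > \alpha^{n+1}_{n+1}$.

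I would present these as two short lemmas (unitary diagonal rescaling; strictness via the recurrence) and then assemble the proposition in two lines. The main obstacle is not any single step — each is classical — but making sure the strictness argument is airtight: one must check that the coprimality of consecutive characteristic polynomials genuinely handles \emph{all} the potential coincidences in the interlacing chain, and that simplicity of the spectrum (which needs $b_h\neq 0$ for \emph{all} relevant $h$, matching the hypothesis) takes care of the rest. Once that bookkeeping is done, the interlacing of $\Sigma_{A_h}$ and $\Sigma_{A_{h+1}}$ for every $h$ follows, and Theorems \ref{DiagD2} and \ref{DiagD3} are obtained by specializing the $a_h$ to the entries of $X(\Lambda)$ and $X_m(\Lambda)$ respectively and noting that passing from $\Lambda$ to $\Lambda{+}1$ enlarges the matrix by exactly one row and column.
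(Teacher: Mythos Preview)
Your proposal is correct, but it takes a genuinely different route from the paper's own proof.

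For the first assertion (dependence on $|a_h|$ only), you argue by an explicit diagonal unitary conjugation $D^\dagger A_n D$, which is clean and immediate. The paper instead deduces this from the three-term recurrence $p_{n+1}(\alpha)=\alpha\,p_n(\alpha)-|a_n|^2\,p_{n-1}(\alpha)$ for the characteristic polynomials, observing that only $|a_h|^2$ enters. Both arguments are standard; yours is more conceptual, the paper's is a by-product of machinery it needs anyway.

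For the interlacing, the paper does \emph{not} invoke Cauchy's interlacing theorem. It proceeds by a direct induction on $n$: assuming the strict interlacing chain
\[
\alpha^n_1>\alpha^{n-1}_1>\alpha^n_2>\alpha^{n-1}_2>\cdots>\alpha^{n-1}_{n-1}>\alpha^n_n,
\]
it evaluates $p_{n+1}$ at each root $\alpha^n_h$ of $p_n$ via the recurrence, obtaining $p_{n+1}(\alpha^n_h)=-|a_n|^2\prod_j(\alpha^n_h-\alpha^{n-1}_j)$, reads off the sign from the induction hypothesis, and then uses the intermediate value theorem to locate one root of $p_{n+1}$ in each of the $n+1$ intervals. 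This is the classical sign-tracking proof for Jacobi matrices, entirely self-contained.

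Your approach is more modular: weak interlacing is imported as a black box, and the work goes into upgrading it to strict via the coprimality of consecutive $p_j$ (a common root would cascade down to $p_0\equiv 1$). This is correct and arguably shorter to write, at the cost of citing Cauchy interlacing. Note, incidentally, that your separate appeal to simplicity of each spectrum is redundant for the strict chain itself --- once you know $p_n$ and $p_{n+1}$ share no root, the weak Cauchy inequalities $\alpha^{n+1}_h\ge\alpha^n_h\ge\alpha^{n+1}_{h+1}$ become strict, and simplicity of both spectra follows from the strict chain. What each approach buys: the paper's argument is elementary and needs no external theorem; yours separates the structural input (Cauchy) from the algebraic one (coprimality via the recurrence), which makes the logic more transparent and easier to reuse.
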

%\begin{propo} The spectra $\Sigma_{A_n},\Sigma_{A_{n+1}}$ of $A_n,A_{n+1}$ interlace, i.e. between any two consecutive
%elements of $\Sigma_{A_{n+1}}$ there is exactly one element of $\Sigma_{A_n}$.
%\end{propo}
As a particular consequence, all eigenvalues are simple, and the inequalities $\ge$ are strict.
\begin{proof} 
The eigenvalue equation for $A_n$ reads $p_n(\alpha)=0$, where the lhs is the
polynomial of degree $n$ defined by $p_n(\alpha)=\det(\alpha I_n-A_n)$ (here $I_n$ is the unit matrix);  \ $p_n(\alpha)=(\alpha-\alpha^n_1)...(\alpha-\alpha^n_n)$ \ implies that
 $p_n(\alpha)>0$ for all $\alpha>\alpha^n_1$. We easily find $\alpha^1_1=0$ and $\alpha^2_1=|a_1|$, $\alpha^2_2=-|a_1|$, so the claim is true for $n=1$. 
Applying Laplace rule with the last two rows   of the determinant of 
\bea
\alpha I_{n+1}-A_{n+1}=\left(\!\!
\ba{cc}
\alpha I_{n-1}-A_{n-1}\:\:\:  &  \ba{cc}
                                               \ba{c}0_{n-2}^T \\ -a_{n-1} \ea & 0_{n-1}^T\ea   \\
\ba{cc}0_{n-2} & \:\:-\overline{a_{n-1}} \\
0_{n-2} & 0 \ea                                     & \ba{cc}\alpha \:\: & \:\: -a_n\\ 
                                                                -\overline{a_n}\:\:\:\: & \:\:\:\:\alpha \ea
\ea \!\!\right) \nonumber
\eea
(here $0_k$ is the row with $k$ zeroes, $0_k^T$ its transpose column)  we find the recurrence relation \ $p_{n+1}(\alpha)=\alpha p_n(\alpha)-|a_n|^2p_{n-1}(\alpha)$. \ 
Now assume that the claim is true for all $m\le n$, with a generic $n>1$; by the previous relation also $p_{n+1}(\alpha)$, and its roots, depend only on the $|a_h|$, and 
$p_n(\alpha^n_h)=0$ implies
\be    \label{p_nRecurrence}
p_{n+1}(\alpha^n_h)=\alpha^n_h\, p_n(\alpha^n_h)-|a_n|^2
p_{n-1}(\alpha^n_h)=-|a_n|^2
(\alpha^n_h-\alpha^{n-1}_1)(\alpha^n_h-\alpha^{n-1}_2)...(\alpha^n_h-\alpha^{n-1}_{n-1}).
\ee
By the induction hypothesis, 
\be \label{IndHyp}
\alpha^n_1\: >\:\alpha^{n-1}_1\: >\:\alpha^n_2\: >\:\alpha^{n-1}_2\: >\: ...\: >\:\alpha^n_{n-1}\: >\:\alpha^{n-1}_{n-1}\: >\:\alpha^n_n;
\ee
choosing $h=1$ all the brackets at the rhs(\ref{p_nRecurrence}) are positive, and the rhs is negative, hence by continuity there is a $\alpha^{n+1}_1>\alpha^n_1$ such that
$p_{n+1}(\alpha^{n+1}_1)=0$; choosing $h=2$ all the brackets at the rhs are positive but the first one, and the rhs is positive, hence by continuity there is a $\alpha^{n+1}_2\in]\alpha^n_2,\alpha^n_1[$ such that $p_{n+1}(\alpha^{n+1}_2)=0$; ....; finally,
choosing $h=n$ one finds that the sign of  the rhs is $(-1)^{n+1}$, hence by continuity there is a $\alpha^{n+1}_{n+1}<\alpha^n_n$ such that $p_{n+1}(\alpha^{n+1}_{n+1})=0$. 
\end{proof}

\noindent
{\bf Remarks.} \ Inequalities (\ref{IndHyp}) and the sign of $\alpha^{n-1}_h$ determine the sign of 
$$
p_{n+1}(\alpha^{n-1}_h)=\alpha^{n-1}_h\, p_n(\alpha^{n-1}_h)
=\alpha^{n-1}_h(\alpha^{n-1}_h-\alpha^n_1)...(\alpha^{n-1}_h-\alpha^n_n)
$$ 
and whether \ $\alpha^{n+1}_h$ \ belongs to \ $]\alpha^n_h,\alpha^{n-1}_{h-1}]$ \ or \
$]\alpha^{n-1}_{h-1},\alpha^n_{h-1}[$. \ \ On the other hand, if $a_n=0$ then  (\ref{p_nRecurrence})  immediately implies that the $\alpha^n_h$ and $\alpha=0$ are the roots also of $p_{n+1}$, as expected. 

%%%%%%%%%

\end{document}